\newcommand{\RNum}[1]{\uppercase\expandafter{\romannumeral #1\relax}}
\newtheorem{theorem}{Theorem}[section]
\newtheorem{lemma}[theorem]{Lemma}
\newtheorem{definition}[theorem]{Definition}
\newtheorem{assumption}[theorem]{Assumption}
\newenvironment{customthm}[1]
  {\innercustomthm}
  {\endinnercustomthm}
\newenvironment{customlem}[1]
  {\innercustomlem}
  {\endinnercustomlem}
\begin{document}

\title{ Control Inversion: A Clustering-Based Method for Distributed Wide-Area Control of Power Systems
}

\author{Nan~Xue,~\IEEEmembership{Student Member,~IEEE},
        Aranya~Chakrabortty,~\IEEEmembership{Senior Member,~IEEE}

\thanks{N. Xue and A. Chakrabortty are with the Department
of Electrical and Computer Engineering, North Carolina State University, Raleigh,
NC, 27695 USA, e-mail: nxue@ncsu.edu, achakra2@ncsu.edu }
\thanks{The work is supported partly by the US National Science Foundation (NSF) under grant ECCS 1054394.}
}

\maketitle
\thispagestyle{empty}
\pagestyle{empty}

\begin{abstract}
Wide-area control (WAC) has been shown to be an effective tool for damping low-frequency oscillations in power systems. In the current state of art, WAC is challenged by two main factors - namely, scalability of design and complexity of implementation. In this paper we present a control design called control inversion that bypasses both of these challenges using the idea of clustering. The basic philosophy behind this method is to project the original power system model into a lower-dimensional state-space through clustering and aggregation of generator states, and then designing an LQR controller for the lower-dimensional model. This controller is finally projected back to the original coordinates for wide-area implementation. The main problem is, therefore, posed as finding the projection which best matches the closed-loop performance of the WAC controller with that of a reference LQR controller for damping low-frequency oscillations. We verify the effectiveness of the proposed design using the NPCC $48$-machine power system model.

\end{abstract}
\begin{IEEEkeywords}
wide-area control, model reduction, damping, clustering, optimal control
\end{IEEEkeywords}

\section{Introduction}

Over the past few years, the occurrence of a series of blackouts in different parts of the world has led power system utility owners to look beyond the traditional approach of controlling the grid via local feedback, and instead transition to system-wide control, often referred to as {\it wide-area control} (WAC). Several papers have been reported in the literature for WAC design \cite{hinf1}-\cite{wu}, especially for damping of electro-mechanical oscillations, but its transition to practice is still challenged by two daunting factors - namely, scalability of design and complexity of implementation. For example, conventional optimal controller such as LQR and LQG require $\mathcal{O}(n^{3})$ computational complexity ($n$ for a power system can be in the order of thousands), and usually demand all-to-all communication between every generator for implementing the feedback. To address this issue of dense communication, papers such as \cite{florian,wu} have proposed sparse optimal controllers for WAC, but the problem of scalability still remains as these controllers are optimized for the original power system model. 

To bypass these challenges, in this paper we apply a design procedure called {\it control inversion} to develop a WAC controller that admits a significantly more tractable design and simpler implementation than conventional LQR. The method involves three steps. The first step is to project the full-scale power system model with $n$ generators to a lower-dimensional state space by aggregating the generators into $r$ groups ($r\leq n$). This projection is defined by a clustering set $\mathcal{I}$ that indicates the identities of generators in the $r$ groups, and a clustering weight $w$ that decides the contribution of each generator in the aggregated model. The second step is to design an LQR controller in the lower-dimensional state-space using the aggregated model. The final step is to project this controller back to the original dimension using an inverse projection. The overall complexity of this design, thus, scales only with $r$ instead of $n$. Moreover, due to the structure of the projections, the controller naturally results in a simple two-layer hierarchical implementation strategy. The main problem, therefore, is to find $\mathcal{I}$ and $w$ such that the closed-loop performance of the proposed WAC matches that of an optimal LQR controller, which for this design is considered as a reference controller. We propose two relaxations inspired by \cite{nan}, using which this model matching problem reduces to designing $(\mathcal{I},w)$ from a quadratic optimization problem that can be constructed and solved in a numerically inexpensive way.

Preliminary results on this design were presented in our recent paper \cite{nan} for a generic LTI system. The design in this paper, however, is different than that in \cite{nan} to suit the specialties of WAC. For example, unlike \cite{nan} the definition of $\mathcal{H}_{2}$ norm for solving the model matching here is only limited to a selected frequency range that targets the suppression of low-frequency oscillations (also known as {\it inter-area oscillations}) arising from the slow electro-mechanical dynamics of the synchronous generators. This distinction results in different relaxation and solution strategies than those reported in \cite{nan}. The second constraint arises from power balance between the generators as dictated by Kirchhoff's law. This reduces to an additional consensus constraint for the LQR design. Finally, the structure of the projection matrix in this paper is defined to preserve the identity of generators with multiple states, while that in \cite{nan} only preserves a scalar state.

The remainder of the paper is organized as follows. In Section \RNum{2} we present the model of a power system, and formulate the WAC design problem. The control inversion procedure is summarized in Section \RNum{3}. The design of clustering weight $w$ is presented in Section \RNum{4}, followed by the design of clustering set $\mathcal{I}$ in Section \RNum{5}. The design is verified using the NPCC $48$-machine power system model in Section \RNum{6}. Section \RNum{7} concludes the paper.

{\bf{Notation} } The following notations will be used all throughout: $\mathfrak{i}$, imaginary unit, i.e. $\mathfrak{i}^{2} = -1$; $|m|$, absolute value of $m$; $|\mathcal{S}|_{c}$, cardinality of a set $\mathcal{S}$; $\bm{1}_{n}$, column vector of size $n$ with all $1$ entries; $I_{k}$, identity matrix of size $k$, (and the subscript is omitted if the dimension is obvious from context); $M^{T}$, $M^{*}$, transpose or conjugate transpose of a matrix $M$; $M_{ij}$, the $(i,j)^{th}$ entry of a matrix $M$; $diag(m)$, diagonal matrix with vector $m$ on its principal diagonal; $M\otimes N$, Kronecker product of $M$ and $N$; $tr(M)$, trace operation on a matrix $M$; $\| M \|_{F}$, Frobenius norm of a matrix $M$, i.e. $\| M \|_{F}=\sqrt{tr(MM^{T})}$; $ker(M)$, kernel of a matrix $M$; $\bar{\sigma}(M)$, $\bar{\lambda}(M)$, largest singular value, or eigenvalue with largest real part of a matrix $M$; $\underline{\sigma}(M)$, $\underline{\lambda}(M)$, smallest singular value, or eigenvalue with smallest real part of a matrix $M$. A transfer function matrix is defined as $g(s)=C(sI-A)^{-1}B+D$, with a realization form of $g(s)=\left[
\begin{array}{c|c}
A & B \\ \hline
C & D
\end{array}
\right]
$. 

{\bf Proofs:} We provide the proofs of all theorems stated in this paper in the Appendix unless noted otherwise.

\section{Problem Formulation}

\subsection{Power System Model}

Consider a power system network with $n+n_{l}$ number of buses. Without loss of generality, we classify the first $n$ buses to be generator buses, and the remaining $n_{l}$ buses as load buses. While several detailed dynamic models for synchronous generators exist in the literature, a convenient model that is often used for small-signal oscillation analysis and damping control is the so-called flux-decay model with a fast exciter \cite{Pai}. The model is described by the following set of differential-algebraic equations,
\begin{align}
\dot{\delta}_{i}(t) & = \Omega_{i}(t), \label{fourth1} \\
M_{i}\dot{\Omega}_{i}(t) & = P_{mi}(t) - E_{qi}^{\prime}(t)I_{qi}(t) - D_{i}\Omega_{i}(t), \label{fourth2} \\
T_{doi}^{\prime}\dot{E}_{qi}^{\prime}(t) & = - E_{qi}^{\prime}(t) - (x_{di} - x_{di}^{\prime})I_{di}(t) + E_{fdi}(t), \label{fourth3} \\
T_{Ai}\dot{E}_{fdi}(t) & = -E_{fdi}(t) + K_{Ai}(V_{\mathrm{ref,i}} - V_{i}(t)) + u_{i}(t), \label{fourth4} 
\end{align}
for $i=1,...,n$.\footnote{For ease of notation, we will omit the augment $t$ from all variables.} The state variables $(\delta_{i}, \Omega_{i}, E_{qi}^{\prime}, E_{fdi})$ are respectively the phase angle, frequency deviation from the steady-state synchronous frequency ($120\pi$ radian/second), the quadrature-axis internal voltage, and the field excitation voltage; $u_{i}$ is an excitation voltage signal which can be used as a control input. Equations (\ref{fourth1}-\ref{fourth2}) are referred to as the swing equations, and (\ref{fourth3}-\ref{fourth4}) are as the excitation equations. $M_{i}$ is the generator inertia, $P_{mi}$ is the mechanical power input from the $i^{th}$ turbine, $D_{i}$ is the generator damping factor, $T_{doi}^{\prime}$ is the direct-axis excitation time constant, $x_{di}$ is the direct-axis synchronous reactance, $x_{di}^{\prime}$ is the direct-axis transient reactance, $I_{qi}$ and $I_{di}$ together denote the current flow $(I_{qi} - \mathfrak{i}I_{di} )e^{\mathfrak{i}\delta_{i}}$ from the generator to the terminal bus, $V_{i}e^{\mathfrak{i}\theta_{i}}$ is the voltage phasor at the $i^{th}$ bus, $V_{\mathrm{ref},i}$ is the set point value of the generator bus voltage, $T_{Ai}$ is the regulator time constant, and $K_{Ai}$ is the regulator gain. For the purpose of WAC, we consider $P_{mi}$ to be constant, and design controller using only the excitation voltage $u_{i}$. $I_{qi}$, $I_{di}$, $V_{i}$, and $\theta_{i}$ are algebraic variables that can be eliminated from (\ref{fourth1}-\ref{fourth4}) by expressing them in terms of $(E_{qi},\delta_{i})$, $i=1,..,n$, using power balance equations through a process called Kron-reduction \cite{Pai}. The resulting $4n$ nonlinear equations can, thereafter, be used to determine the steady-state equilibrium $(\delta_{i0},\Omega_{i0},E_{qi0}^{\prime},E_{fdi0})$, $i=1,...,n$. Considering a small-signal perturbation around this equilibrium point, the small-signal model for the power system network can finally be derived as
\begin{align}
\begin{bmatrix}
\Delta \dot{\delta} \\
M \Delta \dot{\Omega} \\
T_{do}^{\prime} \Delta \dot{E}_{q}^{\prime} \\
T_{A} \Delta \dot{E}_{fd}
\end{bmatrix} {=} \begin{bmatrix}
0 & I & 0 & 0 \\
L_{1} & -D & F_{1} & 0 \\
L_{2} & 0 & F_{2} & I \\
L_{3} & 0 & F_{3} & -I
\end{bmatrix} \begin{bmatrix}
\Delta \delta \\
\Delta \Omega \\
\Delta E_{q}^{\prime} \\
\Delta E_{fd}
\end{bmatrix} {+} \begin{bmatrix}
0 \\ 0 \\ 0 \\ I
\end{bmatrix}\Delta u, \label{psf}
\end{align}
where $\Delta \delta = [ \Delta \delta_{1} \cdots  \Delta \delta_{n}]^{T}$, $\Delta \Omega = [\Delta \Omega_{1} \cdots \Delta \Omega_{n}]^{T}$, $\Delta E^{\prime}_{q} = [\Delta E^{\prime}_{q1}  \cdots \Delta E^{\prime}_{qn}]^{T}$, $\Delta E_{fd} = [\Delta E_{fd1} \cdots \Delta E_{fdn}]^{T}$, and $\Delta u = [\Delta u_{1} \cdots \Delta u_{n}]^{T}$ are the vectors of states and input, and diagonal matrices $M=diag(M_{1},...,M_{n})$, $T_{do}^{\prime}=diag(T_{do1}^{\prime},...,T_{don}^{\prime})$, $T_{A}=diag(T_{A1},...,T_{An})$, and $D=diag(D_{1},...,D_{n})$. The expressions for the submatrices inside the state matrix follow from linearization, and are provided in Appendix A. It can be easily shown that matrices $L_{1}$, $L_{2}$ and $L_{3}$ are asymmetric Laplacian matrices with zero row sums, and matrices $F_{1}$, $F_{2}$ and $F_{3}$ are diagonally dominant. For ease of analysis, we further apply a coordinate transformation on (\ref{psf}) using $x=(I_{4}\otimes M^{\frac{1}{2}})[\Delta \delta\ \Delta \Omega\ \Delta E_{q}^{\prime}\ \Delta E_{fd}]^{T}$. The small-signal model (\ref{psf}) can then be transformed into
\begin{align}
\begin{bmatrix}
\dot{x}_{1} \\
\dot{x}_{2} \\
\dot{x}_{3} \\
\dot{x}_{4}
\end{bmatrix} {=} \underbrace{\begin{bmatrix}
0 & I & 0 & 0 \\
L_{1m} & -D_{m} & F_{1m} & 0 \\
L_{2m} & 0 & F_{2m} & I \\
L_{3m} & 0 & F_{3m} & -I
\end{bmatrix} }_{A} \underbrace{\begin{bmatrix}
x_{1} \\
x_{2} \\
x_{3} \\
x_{4}
\end{bmatrix} }_{x} {+} \underbrace{\begin{bmatrix}
0 \\ 
0 \\
0 \\
B_{1} 
\end{bmatrix} }_{B} u {+} B_{d}d.
\label{psff}
\end{align}
Note that in (\ref{psff}) we have also added an extra term $B_{d}d$, where $d\in \mathbb{R}^{n_{d}}$ represents a disturbance entering into the system through matrix $B_{d}\in \mathbb{R}^{n\times n_{d}}$. The remaining matrices are defined by %
{\small
\begin{align*}
& B_{1} = M^{\frac{1}{2}}T_{A}^{-1},\ L_{1m} = M^{-\frac{1}{2}}L_{1}M^{-\frac{1}{2}},\ L_{2m} = M^{\frac{1}{2}}T_{do}^{\prime-1}L_{2}M^{-\frac{1}{2}},\\
& L_{3m} {=} M^{\frac{1}{2}}T_{A}^{{-}1}L_{3}M^{{-}\frac{1}{2}},\ D_{m} {=} M^{{-}\frac{1}{2}}DM^{{-}\frac{1}{2}}, F_{1m} {=} M^{{-}\frac{1}{2}}F_{1}M^{{-}\frac{1}{2}}, \\
& F_{2m} = M^{\frac{1}{2}}T_{do}^{\prime-1}F_{2}M^{-\frac{1}{2}},\quad F_{3m} = M^{\frac{1}{2}}T_{A}^{-1}F_{3}M^{-\frac{1}{2}}.
\end{align*}}%
Equation (\ref{psff}) will be used as the power system model for our proposed WAC design.

\subsection{Wide-Area Control}

The objective of wide-area control is to improve the transient performance of the power system model (\ref{psff}), especially in enhancing the damping of the complex eigenvalues of $A$ whose frequencies lie in the inter-area frequency range (typically from $0.1$ Hz to $2$ Hz). This problem is posed as a standard LQR optimal control problem. Given two real-valued design matrices $Q \succeq 0$ and $R \succ 0$, the goal is to design $u(t)=-Kx(t)$ that minimizes
the cost function
\begin{align}
J := \int_{0}^{\infty} [ x^{T}(t)Qx(t) + u^{T}(t)Ru(t) ] \mathrm{d}t.
\label{lqr}
\end{align}
We assume Phasor Measurement Units (PMUs) to be installed at a geometrically observable set of buses in the network so that all the generator voltage phasors $(V_{i}, \theta_{i})$ and currents $(I_{qi}, I_{di})$, $i=1,...,n$ can be computed from these measurements, followed by decentralized estimation of the generator states using unscented Kalman filters (for details of this state estimation, please see \cite{Pal}). The state $x$ is, therefore, assumed to be known for implementing the controller. The details of this implementation will be amplified more in the next section. 

Solving (\ref{lqr}), however, is subject to $\mathcal{O}(n^{3})$ computational complexity, and the resulting feedback matrix $K$ is usually a dense matrix, which necessitates an all-to-all communication between all generators for implementing the feedback. Since in any real power system $n$ can be easily in the order of hundreds to thousands, the design soon becomes unscalable. Therefore, instead of applying an optimal LQR controller for WAC, in this paper we resort to a sub-optimal controller $u=-\hat{K}x$ that can potentially bypass these challenges. The controller $\hat{K}$ is supposed to emulate the optimal controller $K$ in terms of their closed-loop responses defined as follows.

{\bf Performance metric:} The performance metric for evaluating WAC is defined as the small-signal power flow between any pair of generators, or equivalently the difference of their phase angles, and the small-signal generator frequencies. We write this as $y=Cx$ where
\begin{align}
C = \begin{bmatrix}
\bar{C} & 0 & 0 & 0 \\
0 & I_{n} & 0 & 0
\end{bmatrix}(I_{4}\otimes M^{-\frac{1}{2}}).
\label{perfc}
\end{align} 
In (\ref{perfc}) $\bar{C} \in \mathbb{R}^{n_{\delta}\times n}$ is an indicator matrix with all zeroes except $\bar{C}_{ki} = 1$, $\bar{C}_{kj} = -1$, $i,j\in \{ 1,...,n\}$, $k=1,...,n_{\delta}$. The output $y$ so defined measures $n_{\delta}$ pairs of angle differences between any chosen pair of generators, and the frequency deviations of all generators. Using this definition, we consider two transfer function matrices (TFM) from the disturbance input $d$ to the performance output $y$, which are written respectively as - (1) closed-loop system with optimal controller:
\begin{align}
g(s) = C(sI - A + BK)^{-1}B_{d}, \label{bmtf}
\end{align} 
(2) closed-loop system with proposed sub-optimal controller: 
\begin{align}
\hat{g}(s) = C(sI - A + B\hat{K})^{-1}B_{d}. \label{sotf}
\end{align} 
Note that the disturbance $d$ in this case is a design metric for evaluating the dynamic response of the swing states. Without loss of generality, we assume $(A,B_{d})$ to be controllable. In these notations, the main problem of interest is stated next.

{\bf Problem statement:} Given TFMs $g(s)$ and $\hat{g}(s)$, find a sub-optimal LQR controller $u = - \hat{K}x$ that solves the WAC model matching problem:
\begin{equation}
\begin{aligned}
& \underset{\hat{K}}{\mathrm{minimize}}
& &  \| g(s) - \hat{g}(s) \|_{\mathcal{H}_{2},\bar{\omega}}
\end{aligned}, \label{main} \tag{WM}
\end{equation}
where the norm $\| \cdot \|_{\mathcal{H}_{2},\bar{\omega}}$ is defined by
\begin{align}
\|h(s)\|_{\mathcal{H}_{2},\bar{\omega}} = \sqrt{\frac{1}{2\pi}\int_{-\bar{\omega}}^{\bar{\omega}}tr[h^{*}(j\omega)h(j\omega)]\mathrm{d}\omega}, \label{h2w}
\end{align}
for any stable transfer function matrix $h(s)$, and $[0,\bar{\omega}]$, $\bar{\omega}\in \mathbb{R}$ indicating the frequency range of inter-area oscillations in the power system model (\ref{psff}). The controller $\hat{K}$ should satisfy the following three constraints.
\begin{enumerate}
\item {\it Consensus} - Since the total amount of power in the network remains conserved, the model (\ref{psff}) exhibits a consensus property which manifests as a zero eigenvalue in the state matrix $A$. The same property must also be true in closed-loop, i.e. $A -B\hat{K}$ must have a zero eigenvalue (often referred to as the DC mode \cite{Pai}).
 
\item {\it Computation} - The design complexity of $\hat{K}$ should be less than $\mathcal{O}(n^{3})$.
\item {\it Implementation} - The structure of $\hat{K}$ is desired to produce a much simpler communication topology between the generators.
\end{enumerate}

In order to solve (\ref{main}) under these three constraints, we employ a design procedure called {\it control inversion}. The control inversion strategy was introduced in our recent work \cite{nan} for a generic LTI system. To cope with the specific properties and constraints that arise from the power system model (\ref{psff}), this paper develops three major extensions over \cite{nan} - namely the consensus constraint listed above, the structural constraint on $\hat{K}$ which now preserves the identity of generators with all four states instead of the scalar state assumption in \cite{nan}, and finally defining the $\mathcal{H}_{2}$ norm in (\ref{main}) over the inter-area frequency range using (\ref{h2w}) instead of the standard $\mathcal{H}_{2}$ norm definition in \cite{nan}. We next provide an overview of this control inversion strategy.

\section{Overview of Control Inversion}

\begin{figure*}
\centering
\includegraphics[width=1\textwidth]{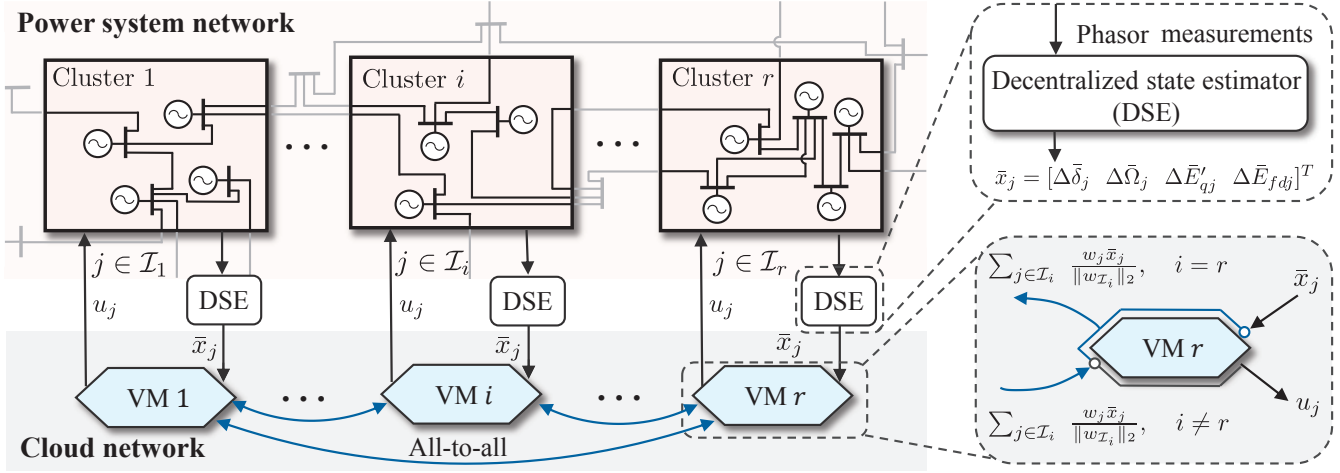} 
\caption{Cyber-physical implementation of proposed wide-area controller $\hat{K}$}
\label{arch}
\vspace{-1em}
\end{figure*}

Control inversion starts from defining a structured projection matrix based on clustering of the $n$ generators, as follows. 
\begin{definition}
Given a vector $w\in \mathbb{R}^{n}$ and an integer $r$, where $0< r \leq n$, define $r$ non-empty, distinct, and non-overlapping subsets of the generator index set $\mathcal{V} = \{ 1,...,n \}$, respectively denoted as $\mathcal{I} = \{ \mathcal{I}_{1},...,\mathcal{I}_{r} \}$, such that $\mathcal{I}_{1} \cup ... \cup \mathcal{I}_{r} = \mathcal{V}$. We refer to such a set $\mathcal{I}$ as the clustering set. A clustering-based projection matrix $P \in \mathbb{R}^{r\times n}$ is defined as 
\begin{align}
P_{ij} := \begin{cases}
\frac{w_{j}}{| \| w_{\mathcal{I}_{i}} \|_{2} }, \quad j\in \mathcal{I}_{i} \\
0,\quad \text{otherwise} 
\end{cases}, 
\label{Pdefe}
\end{align}
where $w_{\mathcal{I}_{i}} = [w_{\mathcal{I}_{i}\{ 1 \}},\cdots,w_{\mathcal{I}_{i}\{ | \mathcal{I}_{i} |_{c} \}}]^{T} $ is a non-zero vector, and $\mathcal{I}_{i}\{ j \}$ denotes the $j^{th}$ element in the set $\mathcal{I}_{i}$. 
\label{Pdef}
\end{definition}

With the projection $P$ defined above, the design for $\hat{K}$ can be summarized through the following three steps.

\subsection{Design Strategy for $\hat{K}$ }

\subsubsection{Projection to lower-dimensional space}
Define a stacked projection matrix for the power system model (\ref{psff}) as
\begin{align}
\Pi = I_{4}\otimes P.
\label{Pstack}
\end{align}
With $\Pi$, a lower-dimensional system can be defined as 
\begin{align}
\begin{bmatrix}
\dot{\tilde{x}}_{1} \\
\dot{\tilde{x}}_{2} \\
\dot{\tilde{x}}_{3} \\
\dot{\tilde{x}}_{4}
\end{bmatrix} {=} \underbrace{\begin{bmatrix}
0 & I & 0 & 0 \\
\tilde{L}_{1m} & -\tilde{D}_{m} & \tilde{F}_{1m} & 0 \\
\tilde{L}_{2m} & 0 & \tilde{F}_{2m} & I \\
\tilde{L}_{3m} & 0 & \tilde{F}_{3m} & -I
\end{bmatrix} }_{\tilde{A} = \Pi A \Pi^{T}} \underbrace{\begin{bmatrix}
\tilde{x}_{1} \\
\tilde{x}_{2} \\
\tilde{x}_{3} \\
\tilde{x}_{4}
\end{bmatrix} }_{\tilde{x}} {+} \underbrace{\begin{bmatrix}
0 \\ 
0 \\
0 \\
\tilde{B}_{1} 
\end{bmatrix} }_{\tilde{B}=\Pi B } \tilde{u} {+} \tilde{B}_{d}d
\label{psr}
\end{align}
with $\tilde{B}_{1} = P B_{1}$, $\tilde{B}_{d} = \Pi B_{d}$, $\tilde{D}_{m} = P D_{m} P^{T}$, $\tilde{L}_{im} = P L_{im} P^{T}$, and $\tilde{F}_{im} = P F_{im} P^{T}$, $i=1,2,3$. An important point to note is that unlike $x$ in (\ref{psff}) the state vector $\tilde{x}$ in (\ref{psr}) does not have any physical meaning. The model in (\ref{psr}) is a hypothetically defined model that is only meant to facilitate the design of $\hat{K}$.

\subsubsection{Lower-dimensional design}
Based on (\ref{psr}), we next pose a lower-dimensional LQR problem with respect to the two projected design parameters $\tilde{Q} = \Pi Q\Pi^{T} \in \mathbb{R}^{4r\times 4r}$ and $\tilde{R}=R$. This LQR problem is approached by the consensus-preserving reformulation, and yields a lower-dimensional matrix $\tilde{X}$ from 
\begin{align}
\tilde{X} = \mathrm{CPLQR}(\tilde{A},\tilde{B},\tilde{Q},\tilde{R}). 
\label{cplqro}
\end{align}
where the definition of the function $\mathrm{CPLQR}(\cdot)$ will be explained in Section \RNum{4}.

\subsubsection{Inverse projection to original coordinates}
Once $\tilde{X}$ is solved from (\ref{cplqro}), one can project it back to the original coordinates via inverse projection
\begin{align}
\hat{X} = \Pi^{T} \tilde{X} \Pi.
\end{align}
This projected matrix $\hat{X}$ can be implemented in (\ref{psff}) using $u=-R^{-1}B^{T}\hat{X}x$. The sub-optimal controller for (\ref{main}), therefore, follows as 
\begin{align}
\hat{K} = R^{-1}B^{T}\hat{X}. \label{Khat}
\end{align}

Equations (\ref{Pdefe}) to (\ref{Khat}) define the control inversion method, which reduces to finding the clustering set $\mathcal{I}$ and clustering weight $w$ to solve (\ref{main}). The benefit here is that $\hat{K}$ is designed using (\ref{cplqro}), which involves matrices of dimension $r\leq n$. If the system operator chooses $r$ to be sufficiently small (e.g. there can be close to $n=500$ generators but only $r=5$ clusters), the design of $\hat{K}$ becomes numerically more tractable than an LQR controller $K$ in $n$-dimension. Moreover, the controller $\hat{K}$ is naturally imposed with the structure of $\Pi$, which results in a simple two-layer hierarchical implementation scheme, as described next.

\subsection{Implementation Strategy for $\hat{K}$}
In a practical power system all four states of a generator may not be directly measurable. One plausible way of estimating the states can be through the decentralized state estimator (DSE) that has been recently reported in \cite{Pal}. Here we denote by $\bar{x}$ the estimated state vector for $x$. The corresponding implementation architecture of this scheme is shown in Fig. \ref{arch}. The generators are divided into $r$ clusters, each equipped with its own DSE. Each cluster is assumed to have PMUs placed such that they make the generator buses geometrically observable. The voltage and current phasors of every generator buses are computed from these PMU measurements, and sent to the DSE of that cluster. The $i^{th}$ DSE generates the state estimates $\bar{x}_{j} = [\bar{\delta}_{j}\ \bar{\Omega}_{j}\ \bar{E}_{qj}^{\prime}\ \bar{E}_{fdj}]^{T}$, $j \in \mathcal{I}_{i}$, $i=1,...,r$. The state estimates are transmitted to $r$ distributed computers (referred to as virtual machine or VMs in Fig. \ref{arch}) that can be created in a cloud network \cite{aranyab}. The implementation of the feedback $u = -\hat{K}\bar{x}$ follows three steps:

{\it Step 1 - state averaging $\Pi \bar{x}$:}  First, the $i^{th}$ VM receives all the $\bar{x}_{j}$ from its designated DSE, i.e. $j\in \mathcal{I}_{i}$, $i=1,...,r$. Each VM then computes the weighted averaged state vector $\sum_{j\in \mathcal{I}_{i}} \frac{w_{j}\bar{x}_{j}}{ \| w_{\mathcal{I}_{i}} \|_{2} }$ for its cluster, $i=1,...,r$. This averaged vector corresponds to the $(i,i+r,i+2r,i+3r)^{th}$ entries of the vector $\Pi \bar{x}$. 

{\it Step 2 - lower-dimensional feedback $\tilde{X}\Pi \bar{x}:$} Next, the VMs exchange their weighted averages, and each computes the $4r$-dimensional vector $\tilde{X}\Pi \bar{x}$. Note that no VM will be able to infer individual state measurements from other clusters, and hence data privacy between the VMs is maintained.

{\it Step 3 - broadcast of control $u = -R^{-1}B^{T}\hat{X}x$:} Finally, the $i^{th}$ VM computes the control signal $u_{j}$, $j\in \mathcal{I}_{i}$ by taking linear combinations of the elements in $\tilde{X}\Pi \bar{x}$. The linear combination follows directly from $u = -(R^{-1}B^{T}\Pi^{T})\tilde{X}\Pi \bar{x}$. The control signal $u_{j}$, $j=1,...,n$ is then transmitted to its respective generator.

In the worst case when every generator is equipped with a PMU, the hierarchical implementation results in at most $n+{{r}\choose{2} }$ bidirectional communication links, including $n$ links between PMUs and DSEs, and ${{r}\choose{2} }$ links between VMs assuming the DSEs to be located directly inside the VMs. If $r\ll n$, this communication topology can be significantly sparser than that of an optimal LQR controller which requires ${{n}\choose{2}}$ number of links. Moreover, compared to the sparsity-based designs in \cite{florian} that exploit the controller structure by imposing $l_{1}$ penalties in the objective function, the structure in $\hat{K}$ instead is parameterized by the clustering set $\mathcal{I}$ and weight $w$, and therefore, can be flexibly tuned and designed. In the next two sections we present the design of these two parameters under the consensus constraint and the computational preference listed under (\ref{main}).

\section{Consensus-Preserving Reformulation}

The standard LQR formulation for WAC becomes infeasible when one imposes the consensus constraint. To resolve this problem, in this section we propose a reformulation of the standard LQR, referring it as consensus-preserving LQR (CPLQR). We start by explaining the consensus property of the power system model (\ref{psff}).

\subsection{Consensus Property of Power System}

The consensus behavior of the model (\ref{psff}) is decided by the three asymmetric Laplacian matrices $L_{1}$, $L_{2}$ and $L_{3}$ defined in (\ref{psf}). Since the rows of each of these matrices sum to zero, they have at least one zero eigenvalue which forces the states to reach a consensus value. We characterize this consensus property of (\ref{psff}) as follows.

\begin{definition}
The power system model (\ref{psff}) admits an angular consensus point, which is defined by a zero eigenvalue of $A$ and its right eigenvector $v_{0}$ as
\begin{align}
v_{0} = \begin{bmatrix}
\bar{v}^{T} & 0 & 0 & 0
\end{bmatrix}^{T},\quad \bar{v} = \frac{M^{\frac{1}{2}}}{\sqrt{tr(M)}} \mathbf{1}_{n}.
\label{vbar}
\end{align}
Here $\bar{v}$ is the right null space of $L_{1m}$, $L_{2m}$ and $L_{3m}$, i.e. $L_{1m}\bar{v} = L_{2m}\bar{v} = L_{3m}\bar{v} = 0$. \label{consensus}
\end{definition} 

The physical interpretation of consensus lies in the phase angle $\Delta \delta$, which corresponds to the non-zero entries in $v_{0}$. From \cite{Pai}, the small-signal power flow between generators $i$ and $j$ is directly proportional to the angle difference $\Delta \delta_{i} - \Delta \delta_{j}$. One immediate consequence of this property is the non-uniqueness of equilibrium value of the power flow. That is, both $(\delta_{i0},\delta_{j0})$ and $(\delta_{i0} + \Delta \delta_{i},\delta_{j0} + \Delta \delta_{j})$ will result in the same equilibrium for any angle deviations as long as $\Delta \delta_{i} - \Delta \delta_{j} = 0$. Due to this consensus behavior, we define the following stability criterion.
\begin{definition}
{\bf (Consensus stability):} The power system model (\ref{psff}) is called consensus stable if all eigenvalues except for one zero eigenvalue of $A$ lie in the left half plane. 
\label{constab}
\end{definition}

Consensus stability is basically a relaxation of asymptotic stability of (\ref{psff}) with the consensus point excluded. In practice, the power flows in a power system will always remain balanced, and thereby preserve angular consensus. Hence, we conform to the following assumption throughout the paper.
\begin{assumption}
The power system network model (\ref{psff}) is consensus stable.
\end{assumption}

In the existing literature several papers such as \cite{florian} have proposed control designs that neglect the consensus property of power system models. The flip side of these designs is that the control will force all angle deviations $\Delta \delta_{i}$ to converge to zero. In reality, however, it may be preferable to drive this angle deviation to a nearby consensus value, e.g. $\Delta \delta_{i} = \frac{1}{n} \sum_{j=1}^{n}\Delta \delta_{j}$, $i=1,...,n$, especially if $\Delta \delta_{i}$ has large absolute magnitude. Note that it is also possible to get rid of the consensus point in (\ref{psff}) by modeling the states $\Delta \delta$ directly as angular differences with respect to a reference generator \cite{Pai}, or similarly by applying an orthonormal projection on (\ref{psff}) as shown in \cite{wu}. The drawback, however, is that the states in these models no longer retain their individual identities, as a result of which the network structure of $A$ is destroyed. For our design of $\hat{K}$, we, therefore, stick to the notion of consensus stability, and construct $\hat{K}$ such that angular consensus is preserved in the closed-loop state matrix $A+B\hat{K}$. Before proceeding to the reformulation of LQR, we make an additional assumption to ensure that the model (\ref{psff}) is feasible for control. 
\begin{assumption}
Matrix $F_{1}$ (or $F_{1m}$) is nonsingular at the equilibrium $(\delta_{i0},\Omega_{i0},E_{qi0}^{\prime},E_{fdi0})$, $i=1,\cdots,n$.
\label{invt}
\end{assumption}
This assumption holds in practice because $F_{1}$ is a diagonal-dominant matrix. This results in structural controllability for both the original system (\ref{psff}) and lower-dimensional system (\ref{psr}) as follows. 
\begin{lemma}
The pairs $(A,B)$ and $(\tilde{A},\tilde{B})$ are controllable.
\label{cont}
\end{lemma}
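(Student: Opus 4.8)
The plan is to verify controllability of both pairs through the Popov--Belevitch--Hautus (PBH) test, exploiting the block structure of the state matrices together with the fact that the input enters only the $x_{4}$ (field-voltage) block. For $(A,B)$ I would suppose there is a nonzero left eigenvector $w^{*}=[\,w_{1}^{*}\ w_{2}^{*}\ w_{3}^{*}\ w_{4}^{*}\,]$ with $w^{*}A=\lambda w^{*}$ and $w^{*}B=0$, and show $w=0$. First, $w^{*}B=w_{4}^{*}B_{1}=0$; since $B_{1}=M^{\frac{1}{2}}T_{A}^{-1}$ is a product of nonsingular diagonal matrices, it is invertible, so $w_{4}=0$. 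Substituting $w_{4}=0$ into the four block-rows of $w^{*}A=\lambda w^{*}$ and reading them from the bottom up, the fourth block gives $w_{3}^{*}=\lambda w_{4}^{*}=0$; the third block then reduces to $w_{2}^{*}F_{1m}=0$, which forces $w_{2}=0$ by Assumption \ref{invt} (nonsingularity of $F_{1m}$); and the second block finally gives $w_{1}^{*}=\lambda w_{2}^{*}=0$. Hence $w=0$, contradicting that $w$ is an eigenvector, so $[\lambda I-A\ \ B]$ has full row rank for every $\lambda$ and $(A,B)$ is controllable. Notably, this argument uses only the invertibility of $B_{1}$ and $F_{1m}$; the Laplacian and remaining blocks play no role.

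For $(\tilde{A},\tilde{B})=(\Pi A\Pi^{T},\Pi B)$ the block pattern is identical, so I would run the very same bottom-up elimination, and the proof reduces to checking two nonsingularity facts for the reduced blocks. The key structural observation is that $P$ has orthonormal rows: its rows have pairwise disjoint supports $\mathcal{I}_{i}$ and each is normalized by $\|w_{\mathcal{I}_{i}}\|_{2}$, so $PP^{T}=I_{r}$, $P$ has full row rank, and likewise $\Pi\Pi^{T}=I_{4r}$. This immediately handles the input block: $\tilde{B}_{1}=PB_{1}$ has full row rank $r$ because $B_{1}$ is invertible, so $\tilde{B}_{1}\tilde{B}_{1}^{*}=PB_{1}B_{1}^{*}P^{*}\succ0$ and $w_{4}^{*}\tilde{B}_{1}=0$ again forces $w_{4}=0$.

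The \emph{main obstacle} is the remaining fact, namely nonsingularity of the compressed matrix $\tilde{F}_{1m}=PF_{1m}P^{T}$, which does not follow for an arbitrary compression even when $F_{1m}$ is invertible. My plan is to strengthen Assumption \ref{invt} using the stated diagonal dominance of $F_{1}$: I would argue that $F_{1m}=M^{-\frac{1}{2}}F_{1}M^{-\frac{1}{2}}$ has a definite symmetric part, since diagonal dominance with positive diagonal makes $F_{1m}+F_{1m}^{T}$ definite and this property is preserved by the congruence with $M^{-\frac{1}{2}}$. Because $P$ has full row rank, the compression preserves definiteness too: for any $z\neq0$ one has $z^{T}\tilde{F}_{1m}z=(P^{T}z)^{T}F_{1m}(P^{T}z)$ with $P^{T}z\neq0$, so $\tilde{F}_{1m}+\tilde{F}_{1m}^{T}=P(F_{1m}+F_{1m}^{T})P^{T}$ is definite, and any real matrix with definite symmetric part is nonsingular.

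With $\tilde{F}_{1m}$ nonsingular, the same elimination ($w_{2}^{*}\tilde{F}_{1m}=0\Rightarrow w_{2}=0$, then $w_{1}=0$) yields $w=0$, establishing controllability of $(\tilde{A},\tilde{B})$. If $F_{1m}$ is in fact symmetric, as is typical when the synchronizing-coefficient block inherits the symmetry of the network admittance, this simplifies to the clean statement $F_{1m}\succ0\Rightarrow\tilde{F}_{1m}=PF_{1m}P^{T}\succ0$, and the symmetric-part bookkeeping disappears entirely.
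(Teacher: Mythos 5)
Your argument for $(A,B)$ is exactly the paper's: a PBH contradiction with a left eigenvector $w$, using invertibility of $B_{1}$ to kill $w_{4}$ and then a bottom-up elimination through the block rows in which only the nonsingularity of $F_{1m}$ (Assumption \ref{invt}) is used. Where you genuinely diverge is on the reduced pair. The paper disposes of $(\tilde{A},\tilde{B})$ in one sentence, claiming that $\tilde{F}_{1}=PF_{1}P^{T}$ is nonsingular ``given that $P$ is orthonormal''; as you correctly observe, orthonormality of the rows of $P$ does not by itself make a compression of a nonsingular matrix nonsingular (take $F=\left[\begin{smallmatrix}0&1\\-1&0\end{smallmatrix}\right]$ and $P=[1\ 0]$), so the paper's justification is incomplete as written. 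Your fix --- invoking the stated diagonal dominance of $F_{1}$ to get a definite symmetric part, which survives the congruences by $M^{-\frac{1}{2}}$ and by the full-column-rank $P^{T}$ --- is the right kind of argument and is what the lemma actually needs; the only loose end is that definiteness of $F_{1m}+F_{1m}^{T}$ requires diagonal dominance of the \emph{symmetric part} (row plus column dominance), not just row dominance, so you should state that slightly stronger hypothesis explicitly. In short: same skeleton as the paper, but your version closes a real gap in the reduced-order step at the modest cost of a marginally stronger assumption on $F_{1}$.
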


\subsection{Consensus-Preserving LQR (CPLQR)}

Recall the standard LQR problem (\ref{lqr}). The optimal solution for (\ref{lqr}) is associated with the algebraic Riccati equation (ARE)
\begin{align}
A^{T}X + XA + Q - XGX = 0,
\label{are}
\end{align}
where $G=BR^{-1}B^{T}$. According to \cite{rc}, the ARE (\ref{are}) admits a unique stabilizing solution $X=X^{T} \succeq 0$ if $(A,BR^{-\frac{1}{2}})$ is stabilizable, and $(Q^{\frac{T}{2}},A)$ is detectable. Given such a solution $X$, the optimal feedback matrix can then be found by $K = R^{-1}B^{T}X$. Here, $X$ guarantees asymptotic stability of the closed-loop system, or equivalently $A-BK$ to be Hurwitz. Incorporating a consensus constraint in this formulation, which means $A-BK$ now must have a zero eigenvalue, makes the LQR problem fundamentally ill-posed. To preserve the well-posedness of LQR, we propose its consensus-preserving reformulation as follows.
\begin{lemma}
{\bf (CPLQR)} Denote the eigenvalue decomposition of $A$ by
\begin{align}
A = V\Lambda V^{-1} = \begin{bmatrix}
v_{0} & v_{1}
\end{bmatrix} \begin{bmatrix}
0 & \\ & \Lambda_{1}
\end{bmatrix} \begin{bmatrix}
w^{T}_{0} \\ w^{T}_{1}
\end{bmatrix},
\label{evd}
\end{align}
where $v_{0}$ is as defined in Proposition \ref{consensus}. Consider an arbitrary real-valued scalar $\epsilon >0$ and define
\begin{align}
A_{\epsilon} : = A - \epsilon v_{0}w_{0}^{T}. \label{Aepsilon}
\end{align}
Suppose the only null space of $Q\succeq 0$ is at $Qv_{0}=0$. The LQR problem $(A_{\epsilon},B,Q,R)$ is guaranteed with a unique stabilizing solution $K = R^{-1}B^{T}X_{\epsilon}$ from
\begin{align}
A_{\epsilon}^{T}X_{\epsilon} + X_{\epsilon}A_{\epsilon} + Q - X_{\epsilon}GX_{\epsilon} = 0,
\label{areep}
\end{align}
and it satisfies that $X_{\epsilon}v_{0}=0$. Irrespective of $\epsilon$, closed-loop state matrix $A-BK$ preserves the angular consensus, and has all of its eigenvalues on the left half plane except for one zero eigenvalue. 
\label{dclqr}
\end{lemma}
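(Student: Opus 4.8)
The plan is to exploit the biorthogonality $w_0^T v_0 = 1$ and $w_0^T v_i = 0$ ($i\geq 1$) supplied by the eigendecomposition (\ref{evd}), and to reduce every claim to the spectral behaviour of rank-one modifications acting along the invariant direction $v_0$. First I would verify that $A_\epsilon$ is Hurwitz. Acting with $A_\epsilon = A - \epsilon v_0 w_0^T$ on the eigenbasis of $A$ gives $A_\epsilon v_0 = A v_0 - \epsilon v_0 (w_0^T v_0) = -\epsilon v_0$ and $A_\epsilon v_i = \lambda_i v_i$ for the remaining right eigenvectors (since $w_0^T v_i = 0$). Hence $A_\epsilon$ carries the spectrum of $A$ with the single zero eigenvalue shifted to $-\epsilon$; by the standing consensus-stability assumption (cf. Definition \ref{constab}) the remaining eigenvalues already lie in the open left half plane, so $A_\epsilon$ is Hurwitz. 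With $A_\epsilon$ Hurwitz, $(A_\epsilon, BR^{-1/2})$ is trivially stabilizable and $(Q^{1/2}, A_\epsilon)$ trivially detectable, so the standard ARE theory cited as \cite{rc} delivers a unique stabilizing solution $X_\epsilon = X_\epsilon^T \succeq 0$ of (\ref{areep}) for which the closed loop $A_\epsilon - G X_\epsilon$ is Hurwitz.

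Next I would establish $X_\epsilon v_0 = 0$, which is the linchpin of the whole statement. Right-multiplying the ARE (\ref{areep}) by $v_0$ and substituting $A_\epsilon v_0 = -\epsilon v_0$ together with the hypothesis $Q v_0 = 0$, the term $Q v_0$ drops out and, setting $z := X_\epsilon v_0$, the identity collapses to $(A_\epsilon - G X_\epsilon)^T z = \epsilon z$ (using the symmetry of $G$ and $X_\epsilon$). Thus $z$ is either zero or a right eigenvector of $(A_\epsilon - G X_\epsilon)^T$ with eigenvalue $\epsilon > 0$. Since $A_\epsilon - G X_\epsilon$ is Hurwitz, so is its transpose, which excludes the positive eigenvalue $\epsilon$; therefore $z = X_\epsilon v_0 = 0$.

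Finally I would read off the closed-loop spectrum of $A - BK$ with $K = R^{-1}B^T X_\epsilon$. Consensus preservation is immediate: $(A - BK) v_0 = A v_0 - B R^{-1} B^T (X_\epsilon v_0) = 0$, so $v_0$ remains a zero-eigenvalue eigenvector and the DC mode survives. For the remaining eigenvalues I would write $A - BK = (A_\epsilon - G X_\epsilon) + \epsilon v_0 w_0^T =: H + \epsilon v_0 w_0^T$, note from $X_\epsilon v_0 = 0$ that $H v_0 = A_\epsilon v_0 = -\epsilon v_0$, and invoke the matrix-determinant (Brauer rank-one) identity $\det(sI - H - \epsilon v_0 w_0^T) = \det(sI - H)\,(1 - \epsilon\, w_0^T (sI - H)^{-1} v_0)$. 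Because $(sI - H)^{-1} v_0 = (s+\epsilon)^{-1} v_0$, the scalar factor equals $s/(s+\epsilon)$, which cancels the $(s+\epsilon)$ root of $\det(sI-H)$ and replaces it by a root at $s = 0$; since $\det(-H)\neq 0$, this zero is simple. Hence the characteristic polynomial of $A - BK$ is $s$ times the product of the remaining strictly stable eigenvalues of $H$, yielding exactly one zero eigenvalue with the rest in the left half plane, independently of $\epsilon$.

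The main obstacle, I expect, is the argument for $X_\epsilon v_0 = 0$: the algebra alone is insufficient, and one must crucially use the stabilizing (Hurwitz) property of the ARE solution to reject the spurious eigenvalue $\epsilon$. The concluding spectral bookkeeping then hinges on recognizing that both rank-one modifications act along the same direction $v_0$, so that the determinant identity transfers the stability of $A_\epsilon - G X_\epsilon$ cleanly back to $A - BK$.
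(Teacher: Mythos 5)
Your proof is correct, but it departs from the paper's route on the two substantive steps. For $X_\epsilon v_0=0$, the paper argues in the opposite direction: it shows that $\ker(X_\epsilon)$ is an $A_\epsilon$-invariant subspace contained in $\ker(Q)$ and then invokes the standard characterization of the kernel of the stabilizing ARE solution (as the stable unobservable subspace) to place $v_0$ inside it. Your argument --- right-multiplying the ARE by $v_0$, collapsing it to $(A_\epsilon-GX_\epsilon)^T z=\epsilon z$ with $z=X_\epsilon v_0$, and rejecting the eigenvalue $\epsilon>0$ because the closed loop is Hurwitz --- is more direct and self-contained, and your closing remark is right that the stabilizing property is the indispensable ingredient. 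For the closed-loop spectrum, the paper conjugates $A-GX_\epsilon$ and $A_\epsilon-GX_\epsilon$ by $V$ to obtain block upper-triangular forms sharing the lower-right block $\Lambda_1-w_1^TGX_\epsilon v_1$, which directly exhibits the zero mode and additionally shows the nonzero closed-loop eigenvalues are exactly those of that block; you instead apply the rank-one determinant identity to $H+\epsilon v_0w_0^T$, which yields the same conclusion (the factor $s/(s+\epsilon)$ trades the $-\epsilon$ root for a simple root at the origin) with slightly less structural information but no need to introduce the full similarity transform. Both treatments leave the same minor point implicit: the $\epsilon$-independence of the nonzero spectrum ultimately rests on the observation that, once $X_\epsilon v_0=0$, the ARE (\ref{areep}) reduces to the $\epsilon$-free equation $A^TX+XA+Q-XGX=0$; this is not needed for the lemma as stated, so its omission is not a gap.
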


For fair comparisons between the controller $K$ in (\ref{bmtf}) and $\hat{K}$ in (\ref{sotf}), from this point onwards we will consider the benchmark LQR design (\ref{lqr}) in terms of its CPLQR reformulation $(A_{\epsilon},B,Q,R)$. We will stick to the same choice of $Q$ as in Lemma \ref{dclqr}, and consider the optimal controller as $K=R^{-1}B^{T}X_{\epsilon}$.

\subsection{Choice of $w$}

The CPLQR reformulation enables the choice of the clustering weight $w$, and the definition of the operator $\mathrm{CPLQR}(\cdot)$ in (\ref{cplqro}) so that $\hat{K}$ bypasses the consensus constraint. The selection of $w$ is guided by the following property of matrices $\tilde{A}$ and $\tilde{Q}$ in lower dimension.
\begin{lemma}
Let $w = \bar{v}$. State matrix $\tilde{A}$ from the lower-dimensional model (\ref{psr}) preserves the zero eigenvalue of angular consensus at its right eigenvector $\tilde{v}_{0} = \Pi v_{0}$, i.e. $\tilde{A}\tilde{v}_{0}=0$. Matrix $\tilde{Q}$ is positive semi-definite, and possesses its only null space at $\tilde{Q}\tilde{v}_{0} = 0$.
\label{psrp}
\end{lemma}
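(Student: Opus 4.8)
The plan is to reduce both statements to two structural identities satisfied by the projection $P$ under the choice $w=\bar{v}$. First I would record the normalization built into Definition~\ref{Pdef}: since the clusters $\mathcal{I}_{1},\dots,\mathcal{I}_{r}$ are disjoint and each row of $P$ is scaled by $\|w_{\mathcal{I}_{i}}\|_{2}$, a direct computation of $(PP^{T})_{ik}=\sum_{j}P_{ij}P_{kj}$ gives $PP^{T}=I_{r}$ (off-diagonal entries vanish by disjointness, diagonal entries equal one by the normalization). Consequently $\Pi\Pi^{T}=(I_{4}\otimes P)(I_{4}\otimes P^{T})=I_{4}\otimes(PP^{T})=I_{4r}$, so $\Pi$ has orthonormal rows as well. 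The second, and crucial, identity is the fixed-point property $P^{T}P\bar{v}=\bar{v}$, which is exactly where the choice $w=\bar{v}$ enters: computing $(P\bar{v})_{i}=\sum_{j\in\mathcal{I}_{i}}\tfrac{\bar{v}_{j}^{2}}{\|\bar{v}_{\mathcal{I}_{i}}\|_{2}}=\|\bar{v}_{\mathcal{I}_{i}}\|_{2}$ and then back-substituting into $P^{T}(P\bar{v})$ shows the normalization cancels to leave $\bar{v}_{j}$ in each entry. Since $v_{0}=[\bar{v}^{T}\ 0\ 0\ 0]^{T}$, this lifts blockwise to $\Pi^{T}\Pi v_{0}=v_{0}$.

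With these two identities the first claim is short. Writing $\tilde{A}\tilde{v}_{0}=\Pi A\Pi^{T}\Pi v_{0}$ and using $\Pi^{T}\Pi v_{0}=v_{0}$ reduces it to $\Pi Av_{0}$. I would then verify $Av_{0}=0$ directly from the block form of $A$ in (\ref{psff}): the only nonzero block action is through the first block column $[0;\ L_{1m};\ L_{2m};\ L_{3m}]\bar{v}$, which vanishes because $L_{1m}\bar{v}=L_{2m}\bar{v}=L_{3m}\bar{v}=0$ by Definition~\ref{consensus}. Hence $\tilde{A}\tilde{v}_{0}=\Pi\cdot 0=0$, establishing that $\tilde{v}_{0}=\Pi v_{0}$ is a zero-eigenvector of $\tilde{A}$.

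For the second claim, positive semidefiniteness is immediate from the congruence $\tilde{Q}=\Pi Q\Pi^{T}$: for any $z$, $z^{T}\tilde{Q}z=(\Pi^{T}z)^{T}Q(\Pi^{T}z)\ge 0$ since $Q\succeq 0$. To pin down the null space I would show both inclusions. That $\tilde{v}_{0}$ lies in it follows from $\tilde{Q}\tilde{v}_{0}=\Pi Q\Pi^{T}\Pi v_{0}=\Pi Qv_{0}=0$, using $\Pi^{T}\Pi v_{0}=v_{0}$ together with the standing hypothesis that the only null space of $Q$ is at $Qv_{0}=0$. Conversely, if $\tilde{Q}z=0$ then $(\Pi^{T}z)^{T}Q(\Pi^{T}z)=0$, and since $Q\succeq 0$ this forces $Q(\Pi^{T}z)=0$, i.e. $\Pi^{T}z$ lies in the span of $v_{0}$, say $\Pi^{T}z=\alpha v_{0}$. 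Applying $\Pi$ and invoking $\Pi\Pi^{T}=I_{4r}$ gives $z=\alpha\Pi v_{0}=\alpha\tilde{v}_{0}$, so the null space is exactly the span of $\tilde{v}_{0}$.

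The only genuinely non-routine step is the fixed-point identity $P^{T}P\bar{v}=\bar{v}$; everything else is congruence bookkeeping and the Kronecker structure of $\Pi$. I would therefore expect the main effort to go into presenting this cancellation cleanly, since it is the single place where the specific weight $w=\bar{v}$ — rather than an arbitrary $w$ — is indispensable, making $\bar{v}$ a fixed point of the orthogonal projector $P^{T}P$ and thereby propagating the consensus direction faithfully through both $\tilde{A}$ and $\tilde{Q}$.
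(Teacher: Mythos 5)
Your proposal is correct and follows essentially the same route as the paper's own proof: the key identity $P^{T}P\bar{v}=\bar{v}$ under $w=\bar{v}$, propagation of the consensus direction through $\tilde{A}=\Pi A\Pi^{T}$ (the paper does this block-by-block via $\tilde{L}_{im}P\bar{v}=0$, you do it in one shot via $Av_{0}=0$, which is the same computation), and congruence for $\tilde{Q}\succeq 0$ with $\tilde{Q}\tilde{v}_{0}=\Pi Qv_{0}=0$. You in fact go slightly further than the paper by proving the converse inclusion (that $\ker\tilde{Q}$ is \emph{exactly} the span of $\tilde{v}_{0}$, using $\Pi\Pi^{T}=I$), a step the paper leaves implicit even though the lemma asserts it.
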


From Lemma \ref{psrp}, by choosing $w=\bar{v}$ both $\tilde{A}$ and $\tilde{Q}$ inherit the null space $\tilde{v}_{0}=\Pi v_{0}$ projected from the consensus point. This satisfies the same condition required by Lemma \ref{dclqr}, and thus, allows a CPLQR reformulation for the lower-dimensional LQR problem $(\tilde{A},\tilde{B},\tilde{Q},\tilde{R})$. Denote the eigenvalue decomposition of $\tilde{A}$ by 
\begin{align}
\tilde{A} = \tilde{V}\Lambda \tilde{V}^{-1} = \begin{bmatrix}
\tilde{v}_{0} & \tilde{v}_{1}
\end{bmatrix} \begin{bmatrix}
0 & \\ & \tilde{\Lambda}_{1}
\end{bmatrix} \begin{bmatrix}
\tilde{w}_{0}^{T} \\ \tilde{w}_{1}^{T} 
\end{bmatrix}, \label{evdr}
\end{align}
and define a matrix $\tilde{A}_{\epsilon} = \tilde{A} - \epsilon \tilde{v}_{0}\tilde{w}_{0}^{T}$ for any $\epsilon >0$. The lower-dimensional matrix $\tilde{X}$ from (\ref{cplqro}), therefore, is the solution of the lower-dimensional ARE
\begin{align}
\tilde{A}_{\epsilon}^{T}\tilde{X} + \tilde{X}\tilde{A}_{\epsilon} + \tilde{Q} - \tilde{X}\tilde{G}\tilde{X} = 0. \label{arerr}
\end{align}
We denote this operation by $\tilde{X}=\mathrm{CPLQR}(\tilde{A},\tilde{B},\tilde{Q},\tilde{R})$ as in (\ref{cplqro}). Note that by definition matrix $\tilde{A}_{\epsilon}$ has the same basis of $\tilde{A}$. Given that $(\tilde{A},\tilde{B})$ is controllable from Lemma \ref{cont}, the pair $(\tilde{A}_{\epsilon},\tilde{B})$ would remain controllable. In addition, the only null space of $\tilde{Q}$ is at $\tilde{v}_{0}$, which corresponds to a stable eigenvalue $-\epsilon$ of $\tilde{A}_{\epsilon}$, and thus makes $(\tilde{Q},\tilde{A}_{\epsilon})$ detectable. These two conditions together guarantee a unique solution $\tilde{X}\succeq 0$ for (\ref{arerr}), which also satisfies $\tilde{X}\tilde{v}_{0} = 0$ according to Lemma \ref{dclqr}. Thereby, the closed-loop state matrix $A-B\hat{K} = A - G\hat{X}$ yields 
\begin{align}
(A-G\hat{X})v_{0} = G \Pi^{T}\tilde{X} \Pi v_{0} = G\Pi^{T}\tilde{X}\tilde{v}_{0} = 0. \label{e25}
\end{align}
Equation (\ref{e25}) suggests that $A - B\hat{K}$ will have a zero eigenvalue, that is $\hat{K}$ will preserve closed-loop consensus. We conclude this result with the following theorem.

\begin{theorem}
Suppose the only null space of $Q\succeq 0$ is at $Qv_{0}=0$. By choosing $w=\bar{v}$, the control inversion steps (\ref{Pdefe}-\ref{Khat}) admit a controller $\hat{K}$. Furthermore, $\hat{K}$ preserves the angular consensus in closed-loop.
\label{wsolution}
\end{theorem}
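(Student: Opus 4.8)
The plan is to assemble the controller by propagating the consensus structure through each stage of the control inversion pipeline, and then to read off the closed-loop null vector directly. First I would check that the pipeline is well-posed under the choice $w=\bar{v}$. Since $\bar{v} = M^{\frac{1}{2}}\mathbf{1}_{n}/\sqrt{tr(M)}$ has strictly positive entries, every cluster weight $w_{\mathcal{I}_{i}}$ is a nonzero vector, so the projection $P$ in (\ref{Pdefe}) and the stacked projection $\Pi$ in (\ref{Pstack}) are well-defined. Invoking Lemma \ref{psrp}, this choice of $w$ forces $\tilde{A}\tilde{v}_{0} = 0$ with $\tilde{v}_{0} = \Pi v_{0}$, and makes $\tilde{Q}\succeq 0$ with its sole null space at $\tilde{v}_{0}$.

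Next I would verify that the lower-dimensional quadruple $(\tilde{A},\tilde{B},\tilde{Q},\tilde{R})$ satisfies exactly the hypotheses demanded by the CPLQR reformulation of Lemma \ref{dclqr}. Controllability of $(\tilde{A},\tilde{B})$ is granted by Lemma \ref{cont}; since $\tilde{A}_{\epsilon} = \tilde{A} - \epsilon\tilde{v}_{0}\tilde{w}_{0}^{T}$ shares the eigenbasis of $\tilde{A}$ and only relocates the consensus eigenvalue from $0$ to $-\epsilon$, the pair $(\tilde{A}_{\epsilon},\tilde{B})$ remains controllable and $(\tilde{Q},\tilde{A}_{\epsilon})$ is detectable, as the single null direction of $\tilde{Q}$ now sits on the stable eigenvalue $-\epsilon$. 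Lemma \ref{dclqr} then delivers, for any $\epsilon>0$, a unique stabilizing $\tilde{X}\succeq 0$ solving (\ref{arerr}) with $\tilde{X}\tilde{v}_{0} = 0$. Because $\tilde{X}$ exists and is unique, the inverse projection $\hat{X} = \Pi^{T}\tilde{X}\Pi$ and the feedback $\hat{K} = R^{-1}B^{T}\hat{X}$ of (\ref{Khat}) are unambiguously defined, which settles the first assertion that the steps (\ref{Pdefe})--(\ref{Khat}) admit a controller $\hat{K}$.

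For the consensus claim I would evaluate the closed-loop matrix on $v_{0}$ directly. Writing $G = BR^{-1}B^{T}$, one has $(A - B\hat{K})v_{0} = (A - G\hat{X})v_{0} = Av_{0} - G\Pi^{T}\tilde{X}\Pi v_{0}$. Definition \ref{consensus} gives $Av_{0} = 0$, while $\Pi v_{0} = \tilde{v}_{0}$ together with $\tilde{X}\tilde{v}_{0} = 0$ annihilates the second term, so $(A - B\hat{K})v_{0} = 0$, matching the computation in (\ref{e25}). Hence $v_{0}$ persists as a right eigenvector of $A - B\hat{K}$ at the zero eigenvalue, i.e. angular consensus is preserved in closed-loop.

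I expect the substantive difficulty to lie entirely in the transfer of the CPLQR hypotheses to the reduced model rather than in the final eigenvector computation, which is routine. The delicate point is confirming that $\tilde{Q}$ retains a one-dimensional null space precisely aligned with $\tilde{v}_{0}$ after projection, so that detectability is not lost to any spurious extra kernel introduced by $\Pi$, and that $\tilde{X}\tilde{v}_{0} = 0$ genuinely holds for the reduced ARE. Both hinge on Lemma \ref{psrp} and Lemma \ref{dclqr}, so the theorem is largely a matter of certifying that their premises compose cleanly through the projection.
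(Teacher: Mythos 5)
Your proposal is correct and follows essentially the same route as the paper, whose proof of Theorem \ref{wsolution} simply points back to the discussion in that subsection: Lemma \ref{psrp} transfers the consensus null space to $\tilde{A}$ and $\tilde{Q}$, Lemmas \ref{cont} and \ref{dclqr} guarantee a unique $\tilde{X}\succeq 0$ with $\tilde{X}\tilde{v}_{0}=0$, and the computation in (\ref{e25}) gives $(A-B\hat{K})v_{0}=0$. Your added observation that $\bar{v}$ has strictly positive entries (so every $w_{\mathcal{I}_{i}}$ is nonzero and $P$ is well-defined) is a small but welcome explicit check the paper leaves implicit.
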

\begin{proof}
The proof follows directly from the discussions in this subsection.
\end{proof}

\section{Design of Generator Clustering Sets}

With the analytical solution of $w$ provided in Section \RNum{4}, the only unknown left for designing $\Pi$ is the clustering set $\mathcal{I}$, which dictates the implementation structure of $\hat{K}$. In this section, we present a design for $\mathcal{I}$ to solve the minimization problem (\ref{main}) under its computational constraint. We start by stating two equivalent realizations for $g(s)$ and $\hat{g}(s)$ as follows.
\begin{lemma}
The two TFMs $g(s)$ and $\hat{g}(s)$ admit the realizations $g(s) = g_{\epsilon}(s)$ and $\hat{g}(s) = \hat{g}_{\epsilon}(s)$ respectively, with
\begin{align}
g_{\epsilon}(s) = \left[
\begin{array}{c|c}
A_{\epsilon} {-} GX_{\epsilon} & B_{d} \\ \hline
C & 0
\end{array}
\right],\ \hat{g}_{\epsilon}(s) = \left[
\begin{array}{c|c}
A_{\epsilon} {-} G\hat{X} & B_{d} \\ \hline
C & 0
\end{array}
\right],
\label{tfshift}
\end{align}
and $A_{\epsilon}$ as defined in (\ref{Aepsilon}).
\label{tfequ}
\end{lemma}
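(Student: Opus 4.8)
The plan is to show that replacing each closed-loop matrix by its $\epsilon$-shifted version in (\ref{tfshift}) perturbs the dynamics only along the consensus direction $v_{0}$, and that this direction is invisible to the performance output $C$. The rank-one modification therefore leaves both transfer function matrices unchanged.

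First I would rewrite the two TFMs using $K = R^{-1}B^{T}X_{\epsilon}$, $\hat{K} = R^{-1}B^{T}\hat{X}$ and $G = BR^{-1}B^{T}$, so that $g(s) = C(sI - A_{cl})^{-1}B_{d}$ with $A_{cl} := A - GX_{\epsilon}$, and $\hat{g}(s) = C(sI - \hat{A}_{cl})^{-1}B_{d}$ with $\hat{A}_{cl} := A - G\hat{X}$. Using $A_{\epsilon} = A - \epsilon v_{0}w_{0}^{T}$ from (\ref{Aepsilon}), the closed-loop matrices of the claimed realizations are $A_{\epsilon} - GX_{\epsilon} = A_{cl} - \epsilon v_{0}w_{0}^{T}$ and $A_{\epsilon} - G\hat{X} = \hat{A}_{cl} - \epsilon v_{0}w_{0}^{T}$, so each shifted matrix is a rank-one update of its unshifted counterpart.

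Second I would record two facts. (i) The consensus direction lies in the kernel of both closed-loop matrices: $A_{cl}v_{0} = Av_{0} - GX_{\epsilon}v_{0} = 0$ because $Av_{0}=0$ from Definition \ref{consensus} and $X_{\epsilon}v_{0}=0$ from Lemma \ref{dclqr}, while $\hat{A}_{cl}v_{0}=0$ is precisely (\ref{e25}). (ii) The consensus direction is unobservable through $C$, i.e. $Cv_{0}=0$. To verify (ii) I would push $v_{0} = [\bar{v}^{T}\ 0\ 0\ 0]^{T}$ through $C$ in (\ref{perfc}): the factor $(I_{4}\otimes M^{-\frac{1}{2}})$ maps the leading block $\bar{v} = M^{\frac{1}{2}}\mathbf{1}_{n}/\sqrt{tr(M)}$ to $\mathbf{1}_{n}/\sqrt{tr(M)}$; the frequency row-block of $C$ annihilates it since the $\Omega$-block of $v_{0}$ is zero, and the angle-difference row-block produces $\bar{C}\mathbf{1}_{n}=0$ because every row of the indicator matrix $\bar{C}$ contains exactly one $+1$ and one $-1$.

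Third, writing $\Phi(s) := sI - A_{cl}$, fact (i) gives $\Phi(s)^{-1}v_{0} = s^{-1}v_{0}$ and $w_{0}^{T}\Phi(s)^{-1}v_{0} = s^{-1}$, using $w_{0}^{T}v_{0}=1$ from $V^{-1}V = I$ in (\ref{evd}). Applying the Sherman--Morrison identity to $sI - (A_{cl} - \epsilon v_{0}w_{0}^{T}) = \Phi(s) + \epsilon v_{0}w_{0}^{T}$ yields
\[
\bigl(\Phi(s) + \epsilon v_{0}w_{0}^{T}\bigr)^{-1} = \Phi(s)^{-1} - \frac{\epsilon\, v_{0}w_{0}^{T}\Phi(s)^{-1}}{s+\epsilon}.
\]
Left-multiplying by $C$ and right-multiplying by $B_{d}$, the correction term carries the factor $Cv_{0}=0$ and vanishes identically, so $g_{\epsilon}(s) = C\Phi(s)^{-1}B_{d} = g(s)$. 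The identical computation with $\hat{A}_{cl}$ in place of $A_{cl}$, for which hypotheses (i) and (ii) hold verbatim, gives $\hat{g}_{\epsilon}(s) = \hat{g}(s)$. The main obstacle I anticipate is establishing $Cv_{0}=0$ cleanly, since it requires unwinding the transformation $(I_{4}\otimes M^{-\frac{1}{2}})$ together with the exact row pattern of $\bar{C}$; once this is in hand the rank-one step is routine. I would also note that the denominator $s+\epsilon$ poses no difficulty, because the difference of the two TFMs is shown to be the identically-zero rational function.
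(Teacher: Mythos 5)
Your proof is correct, and it takes a genuinely different (though closely related) route from the paper's. The paper disposes of Lemma \ref{tfequ} in one sentence by invoking the similarity transformation $V^{-1}(\cdot)V$ from (\ref{evd}): in the eigenbasis the shifted and unshifted closed-loop matrices are block upper triangular and differ only in the $(1,1)$ entry ($-\epsilon$ versus $0$), and that entry only affects the first row of the resolvent, which is annihilated because the first column of $CV$ is $Cv_{0}=0$. You instead keep everything in the original coordinates and treat $A_{\epsilon}-GX_{\epsilon}=(A-GX_{\epsilon})-\epsilon v_{0}w_{0}^{T}$ as a rank-one update, resolving it with Sherman--Morrison. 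Both arguments ultimately rest on exactly the same two facts, which you identify and verify correctly: $v_{0}$ is a common kernel vector of the closed-loop matrices (via $Av_{0}=0$, $X_{\epsilon}v_{0}=0$ from Lemma \ref{dclqr}, and (\ref{e25})), and $Cv_{0}=0$ (your computation pushing $\bar{v}$ through $(I_{4}\otimes M^{-\frac{1}{2}})$ and using $\bar{C}\mathbf{1}_{n}=0$ is right, and is a detail the paper leaves implicit). What your version buys is an explicit, coordinate-free formula for the correction term, $-\epsilon(s+\epsilon)^{-1}v_{0}w_{0}^{T}\Phi(s)^{-1}$, making transparent that the perturbation lives entirely in the unobservable consensus direction; what the paper's version buys is brevity and reuse of the decomposition already computed in the proof of Lemma \ref{dclqr}. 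Your closing remark that the identity between rational functions extends from generic $s$ (avoiding $s=0$ and $s=-\epsilon$) to all $s$ is the right way to handle the apparent singularities.
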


The equivalencies between these TFMs can be verified using a coordinate transformation $V^{-1}$ and $V$ from (\ref{evd}). Facilitated by Lemma \ref{tfequ}, the consensus stability of $g(s)$ and $\hat{g}(s)$ as in (\ref{bmtf}) and (\ref{sotf}) simply becomes asymptotic stability of $g_{\epsilon}(s)$ and $\hat{g}_{\epsilon}(s)$ in (\ref{tfshift}). The main problem (\ref{main}) then becomes
\begin{equation}
\begin{aligned}
& \underset{\Pi(\mathcal{I})}{\mathrm{minimize}}
& &  \| g_{\epsilon}(s) - \hat{g}_{\epsilon}(s) \|_{\mathcal{H}_{2},\bar{\omega}}.
\end{aligned} \label{mainr}
\end{equation}
This minimization, however, is intractable given that its objective function is non-convex in $\Pi$, and that $\Pi$ itself is a combinatorial function of $\mathcal{I}$. To circumvent this problem, we borrow two relaxation steps from our recent paper \cite{nan} for solving (\ref{mainr}). The first relaxation (upper bound relaxation) is used to find an explicit function as the upper bound for the objective function in (\ref{mainr}), while the second relaxation (low-rank approximation) is used to simplify the computational complexity required in constructing the first relaxation. Unlike \cite{nan}, both relaxations here are posed in terms of the norm $\| \cdot \|_{\mathcal{H}_{2},\bar{\omega}}$ instead of the standard $\mathcal{H}_{2}$ norm to target the inter-area oscillation range.

\subsection{Upper Bound Relaxation}

After a few derivations based on (\ref{mainr}), the first relaxation step reduces to the optimization problem 
\begin{equation}
\begin{aligned}
& \underset{\Pi(\mathcal{I})}{\mathrm{minimize}}
& &  \xi = \| (I - \Pi^{T}\Pi)\Phi^{\frac{1}{2}}\|_{F},
\end{aligned}
\label{xi} \tag{RL-1}
\end{equation}
where $\Phi :=\Phi^{\frac{1}{2}} \Phi^{\frac{T}{2}} = \int_{-\bar{\omega}}^{\bar{\omega}} (\mathfrak{i}\omega - A_{\epsilon} + GX_{\epsilon})^{-1}B_{d}B_{d}^{T}(-\mathfrak{i}\omega - A_{\epsilon}^{T} + X_{\epsilon}G)^{-1} \mathrm{d}\omega \succ 0$ is the solution of the Lyapunov equation 
\begin{align}
(A_{\epsilon} - GX_{\epsilon})\Phi & + \Phi(A_{\epsilon} - GX_{\epsilon})^{T} \nonumber \\
& + S(\bar{\omega})B_{d}B_{d}^{T} + B_{d}B_{d}^{T}S(\bar{\omega}) = 0
\label{lyap}
\end{align}
with matrix $S(\bar{\omega})$ defined by
\begin{align}
& \ S(\bar{\omega}) = \frac{1}{2\pi} \int_{-\bar{\omega}}^{\bar{\omega}} (\mathfrak{i}\omega I - A_{\epsilon} + GX_{\epsilon})^{-1} \mathrm{d}\omega .
\end{align}
The basic methodology involved in this relaxation is that $\xi$ serves as an upper bound for the objective function of (\ref{mainr}). Therefore, by minimizing $\xi$ the matching error $\| g_{\epsilon}(s) - \hat{g}_{\epsilon}(s)\|_{\mathcal{H}_{2},\bar{\omega}}$ bounded below can be made small as well, which then helps in attaining the stability of $\hat{g}_{\epsilon}(s)$. Given that the derivation for (\ref{xi}) follows from \cite{nan} except for a few discrepancies in proofs due to the different norm metric. For a complete understanding of how (\ref{xi}) follows from (\ref{mainr}), and the associated stability condition for $\hat{g}_{\epsilon}(s)$, we refer the reader to Appendix B. 

Ideally speaking, (\ref{xi}) can be readily applied for designing $\mathcal{I}$, but its construction requires matrix $\Phi$, which is the solution of the Lyapunov equation (\ref{lyap}), and is subject to $\mathcal{O}(n^{3})$ computational complexity. This violates the computation constraint of (\ref{main}) as we want $\hat{K}$ to be numerically cheaper than $K$. To bypass this difficulty, we next describe an additional relaxation based on (\ref{xi}) that can be constructed in a simple way.

\subsection{Low-Rank Approximation}

The second relaxation is intuited by an explicit expression of the matrix $\Phi$ as follows.
\begin{lemma}
Denote the Hamiltonian matrix $H$ associated with ARE (\ref{areep}) and its  stable invariant subspace by 
\begin{align}
H = \begin{bmatrix}
A_{\epsilon} & - G \\
-Q & - A_{\epsilon}^{*}
\end{bmatrix}
,\quad H \begin{bmatrix}
Z \\ \bar{Z}
\end{bmatrix} = \begin{bmatrix}
Z \\ \bar{Z}
\end{bmatrix}\Lambda,
\label{hamieig}
\end{align}
where $\Lambda = diag([\lambda_{1},...,\lambda_{4n}])$ consists of all the eigenvalues of $H$ located in the left half plane, and denote the real and imaginary parts of the $i^{th}$ eigenvalue as $\lambda_{i} = a_{i} + \mathfrak{i}b_{i}$. Matrix $\Phi$ can be written as 
\begin{align}
\Phi = Z\mathcal{C}Z^{*}, \label{gramsolu}
\end{align}
where $\mathcal{C}$ is a Cauchy-like matrix with its entries defined by
\begin{align}
\mathcal{C}_{ij} & = -\frac{[Z^{-1}]_{i,:} B_{d}B_{d}^{*} [Z^{-*}]_{:,j}}{\lambda_{i} + \lambda_{j}^{*} } (c_{i} + c_{j}^{*}), \label{cauchyim} \\
c_{i} & = \frac{1}{2\pi} [\theta_{ci} - \mathfrak{i}\ln(\frac{a_{i}^{2}+(b_{i}-\bar{\omega})^{2}}{a_{i}^{2}+(b_{i}+\bar{\omega})^{2}})], \\
\theta_{ci} & = \arctan\bigg(\frac{b_{i}-\bar{\omega}}{a_{i}}\bigg) - \arctan\bigg(\frac{b_{i}+\bar{\omega}}{a_{i}}\bigg). \label{thetac}
\end{align}
\label{phiex}
\end{lemma}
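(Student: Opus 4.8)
The plan is to diagonalize the closed-loop matrix $A_\epsilon - GX_\epsilon$ through the stable invariant subspace of the Hamiltonian $H$ in (\ref{hamieig}), and then push this eigenbasis through the integral that defines $\Phi$ so that only a collection of decoupled scalar integrals remains. First I would invoke the standard Riccati correspondence: since the columns of $\begin{bmatrix} Z \\ \bar Z\end{bmatrix}$ span the stable invariant subspace of $H$, the stabilizing solution of (\ref{areep}) is $X_\epsilon = \bar Z Z^{-1}$, and reading off the top block of $H\begin{bmatrix} Z \\ \bar Z\end{bmatrix} = \begin{bmatrix} Z \\ \bar Z\end{bmatrix}\Lambda$ gives $A_\epsilon Z - G\bar Z = Z\Lambda$, i.e. $(A_\epsilon - GX_\epsilon)Z = Z\Lambda$. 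Hence $A_\epsilon - GX_\epsilon = Z\Lambda Z^{-1}$, so the closed-loop matrix is diagonalized with the stable eigenvalues $\lambda_i = a_i + \mathfrak{i}b_i$ on the diagonal of $\Lambda$; invertibility of $Z$ follows from the uniqueness of the stabilizing solution guaranteed by the CPLQR conditions established in Lemma \ref{dclqr}.

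Next I would substitute this factorization into the two resolvents appearing in the integral for $\Phi$. Using $(A_\epsilon - GX_\epsilon)^T = (A_\epsilon - GX_\epsilon)^* = Z^{-*}\Lambda^* Z^*$ (valid because the closed-loop matrix is real), the integrand becomes $Z(\mathfrak{i}\omega I - \Lambda)^{-1}Z^{-1}B_dB_d^T Z^{-*}(-\mathfrak{i}\omega I - \Lambda^*)^{-1}Z^*$. I would then factor $Z$ out to the left and $Z^*$ out to the right of the integral, which isolates $\Phi = Z\mathcal{C}Z^*$ with $\mathcal{C} = \int_{-\bar\omega}^{\bar\omega}(\mathfrak{i}\omega I - \Lambda)^{-1}(Z^{-1}B_dB_d^* Z^{-*})(-\mathfrak{i}\omega I - \Lambda^*)^{-1}\,d\omega$. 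Because $\Lambda$ is diagonal, the two outer factors act entrywise, so the $(i,j)$ entry collapses to the scalar $\mathcal{C}_{ij} = [Z^{-1}]_{i,:}B_dB_d^*[Z^{-*}]_{:,j}\int_{-\bar\omega}^{\bar\omega} \frac{d\omega}{(\mathfrak{i}\omega - \lambda_i)(-\mathfrak{i}\omega - \lambda_j^*)}$, which already reproduces the constant prefactor in (\ref{cauchyim}).

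The remaining task is the scalar integral. I would apply partial fractions, writing $\frac{1}{(\mathfrak{i}\omega - \lambda_i)(-\mathfrak{i}\omega - \lambda_j^*)} = -\frac{1}{\lambda_i + \lambda_j^*}\big(\frac{1}{\mathfrak{i}\omega - \lambda_i} - \frac{1}{\mathfrak{i}\omega + \lambda_j^*}\big)$, which extracts the Cauchy denominator $\lambda_i + \lambda_j^*$; this denominator never vanishes since all $\lambda_i$ lie in the open left half plane, so $\mathrm{Re}(\lambda_i + \lambda_j^*) < 0$. Integrating each term gives a complex logarithm $-\mathfrak{i}\ln(\mathfrak{i}\omega - \lambda_i)$ evaluated at $\pm\bar\omega$; splitting the logarithm into modulus and argument turns the modulus into the $\ln$ term and the argument into the difference of arctangents $\theta_{ci}$ defined in (\ref{thetac}). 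Finally I would use the conjugate-symmetry identity $\int_{-\bar\omega}^{\bar\omega}\frac{d\omega}{\mathfrak{i}\omega + \lambda_j^*} = -\big(\int_{-\bar\omega}^{\bar\omega}\frac{d\omega}{\mathfrak{i}\omega - \lambda_j}\big)^*$ to recombine the two contributions into the symmetric form $c_i + c_j^*$, yielding (\ref{cauchyim}).

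The main obstacle I anticipate is the careful bookkeeping in the last step: tracking the branch of the complex logarithm so that its argument evaluates to the arctangent expression in (\ref{thetac}) with the correct signs (using $a_i<0$), and verifying the conjugate-symmetry relation that lets the per-index integrals assemble into $c_i + c_j^*$ rather than two unrelated terms. The linear-algebra reductions of the first three steps are essentially mechanical once the Hamiltonian diagonalization is in place; it is the honest evaluation of the frequency-limited scalar integral, together with the normalization constant carried by the $\| \cdot \|_{\mathcal{H}_{2},\bar{\omega}}$ definition, that requires the most care.
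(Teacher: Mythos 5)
Your proposal is correct, and it reaches (\ref{gramsolu})--(\ref{cauchyim}) by a genuinely different route than the paper. Both arguments share the same starting point -- reading $A_{\epsilon}-GX_{\epsilon}=Z\Lambda Z^{-1}$ off the stable invariant subspace of $H$ in (\ref{hamieig}) -- but they diverge afterward. The paper never touches the frequency-domain integral directly: it computes the auxiliary matrix $S(\bar{\omega})$ in closed form as a matrix logarithm, and then invokes the known Cauchy-matrix structure of solutions to Lyapunov equations with diagonalizable coefficient (citing Antoulas) applied to (\ref{lyap}), whose right-hand side $S(\bar{\omega})B_{d}B_{d}^{T}+B_{d}B_{d}^{T}S(\bar{\omega})$ is what produces the $(c_{i}+c_{j}^{*})$ factor after "a few calculations." You instead push $Z$ and $Z^{*}$ through the defining integral for $\Phi$ and evaluate the resulting decoupled scalar integrals by partial fractions; your conjugate-symmetry identity $\int_{-\bar{\omega}}^{\bar{\omega}}\frac{\mathrm{d}\omega}{\mathfrak{i}\omega+\lambda_{j}^{*}}=-\bigl(\int_{-\bar{\omega}}^{\bar{\omega}}\frac{\mathrm{d}\omega}{\mathfrak{i}\omega-\lambda_{j}}\bigr)^{*}$ checks out (substitute $\omega\mapsto-\omega$ over the symmetric interval), and it is exactly what assembles the two partial-fraction terms into $c_{i}+c_{j}^{*}$. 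Your route is more self-contained and elementary -- it requires neither the frequency-limited-Gramian Lyapunov characterization (\ref{lyap}) nor the cited Cauchy-matrix lemma, and it makes the origin of both the $\arctan$ and $\ln$ terms in (\ref{thetac}) transparent -- at the cost of the branch-of-logarithm and normalization bookkeeping you flag (the placement of the $\frac{1}{2\pi}$ and the factor of $\tfrac{1}{2}$ from $\ln|\cdot|^{2}$ versus $\ln|\cdot|$), which the paper sidesteps by delegating the structural work to the cited results. The paper's route, in turn, is the one that naturally exposes the Lyapunov equation (\ref{lyap}) it needs elsewhere for the relaxation (\ref{xi}).
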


The construction of $\Phi$ in (\ref{gramsolu}) requires full knowledge of $Z$ and $\Lambda$, which requires eigen-decomposition of $H$ that is subject to the $\mathcal{O}(n^{3})$ computational complexity. Moreover, since $H$ is large and asymmetric, its eigen-decomposition may not be well-defined due to the numerical difficulties \cite{golub}. One would, therefore, prefer to compute only the partial eigenspace and eigenvalues of $H$ using Krylov subspace-based techniques. Accordingly, we approximate $\Phi$ by a low-rank matrix $\Phi_{\kappa}$ defined as follows.

\begin{definition}
(I) Define $\Phi_{\kappa}\in \mathbb{R}^{4n\times 4n}$ as 
\begin{align}
\Phi_{\kappa} := \Phi_{\kappa}^{\frac{1}{2}}\Phi_{\kappa}^{\frac{T}{2}} =[ Z ]_{:,1:\kappa} [\mathcal{C}]_{1:\kappa,1:\kappa} [ Z^{*} ]_{1:\kappa,:}, \label{Phike}
\end{align} 
where $[ Z ]_{:,1:\kappa}$ and $[\mathcal{C}]_{1:\kappa,1:\kappa}$ are respectively the $\kappa$-dimensional submatrices from $Z$ and $\mathcal{C}$. Matrix $\Phi_{\kappa}^{\frac{1}{2}}$ can be found by $\Phi_{\kappa}^{\frac{1}{2}} = [ Z ]_{:,1:\kappa} [\mathcal{C}]_{1:\kappa,1:\kappa}^{\frac{1}{2}}$. 
\label{Phik}
\end{definition}
Using the expression in (\ref{Phike}), we replace $\Phi$ by $\Phi_{\kappa}$, and therefore, propose the relaxation for (\ref{xi}) as
\begin{equation}
\begin{aligned}
& \underset{\Pi(\mathcal{I})}{\mathrm{minimize}}
& &  \xi_{\kappa} = \| (I - \Pi^{T}\Pi) \Phi_{\kappa}^{\frac{1}{2}}\|_{F}.
\end{aligned}
\label{xik} \tag{RL-2}
\end{equation}
The optimality gap between these two optimizations is quantified by the following theorem.

\begin{theorem}
Assume that $Z^{-1}$ has a moderate condition number $\eta$, and each column of $B_{d}$ has a unitary norm. The optimum $\xi_{\kappa*}$ of (\ref{xik}) and the corresponding projection $\Pi_{*} = \mathrm{argmin}\ \xi_{\kappa}$ yield a worst-case error for (\ref{xi}) as 
\begin{align}
\| (I - \Pi_{*}^{T}\Pi_{*})\Phi^{\frac{1}{2}} \|_{F} - \xi_{\kappa*}  \leq  \underbrace{ \sqrt{\eta^{2} n_{d} \sum_{i=\kappa+1}^{4n} -\frac{\theta_{ci}}{2\pi a_{i}}} }_{e},
\label{xierror}
\end{align}
where $\lambda_{i} = a_{i} + \mathfrak{i}b_{i}$, and $\theta_{ci}$ is defined in (\ref{thetac}).
\label{kerror}
\end{theorem}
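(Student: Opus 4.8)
The plan is to upper-bound the left-hand side by the reverse triangle inequality in the Frobenius norm, and then to collapse the resulting difference of two matrix square roots onto the discarded ``tail'' modes $\kappa+1,\dots,4n$. First I would record a structural fact used throughout: because the rows of $P$ in Definition \ref{Pdef} are normalized and supported on disjoint clusters, $PP^{T}=I_{r}$, so that $\Pi\Pi^{T}=I_{4r}$ and $\Pi^{T}\Pi$ is an orthogonal projection. Hence, writing $\Pi_{*}^{\perp}:=I-\Pi_{*}^{T}\Pi_{*}$, we have $\bar{\sigma}(\Pi_{*}^{\perp})\leq 1$, and the two quantities compared in the claim are $\|\Pi_{*}^{\perp}\Phi^{\frac{1}{2}}\|_{F}$ and $\xi_{\kappa*}=\|\Pi_{*}^{\perp}\Phi_{\kappa}^{\frac{1}{2}}\|_{F}$.

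The crucial observation is that $\|\Pi_{*}^{\perp}X\|_{F}$ depends on $X$ only through $XX^{*}$, since replacing a square root $X$ by $XU$ with $U$ unitary leaves the norm unchanged; this frees me to pick a convenient factorization. As $\Phi=Z\mathcal{C}Z^{*}\succ 0$ with $Z$ invertible, the matrix $\mathcal{C}=Z^{-1}\Phi Z^{-*}\succ 0$ is Hermitian positive definite and admits a Cholesky factor $\mathcal{C}=LL^{*}$, $L$ lower triangular. By the nested property of Cholesky, the leading block $L_{11}=[L]_{1:\kappa,1:\kappa}$ is the Cholesky factor of $[\mathcal{C}]_{1:\kappa,1:\kappa}$, so I may take $\Phi^{\frac{1}{2}}=ZL$ and $\Phi_{\kappa}^{\frac{1}{2}}=[Z]_{:,1:\kappa}L_{11}$ (zero-padded to match columns). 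A block multiplication then telescopes the difference: with $Z_{\tau}=[Z]_{:,\kappa+1:4n}$ and the lower blocks $L_{21},L_{22}$ of $L$,
\begin{align*}
\Phi^{\frac{1}{2}}-\Phi_{\kappa}^{\frac{1}{2}}=Z_{\tau}\,[\,L_{21}\ \ L_{22}\,],
\end{align*}
so the retained/discarded cross terms cancel. The reverse triangle inequality together with $\bar{\sigma}(\Pi_{*}^{\perp})\leq 1$ then gives $\|\Pi_{*}^{\perp}\Phi^{\frac{1}{2}}\|_{F}-\xi_{\kappa*}\leq\|Z_{\tau}[\,L_{21}\ L_{22}\,]\|_{F}$.

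It remains to bound this tail norm by $e$. Using $\mathcal{C}=LL^{*}$, the bottom-right block is $[\mathcal{C}]_{\kappa+1:4n,\kappa+1:4n}=L_{21}L_{21}^{*}+L_{22}L_{22}^{*}$, hence $\|Z_{\tau}[\,L_{21}\ L_{22}\,]\|_{F}^{2}=\mathrm{tr}\!\big(Z_{\tau}[\mathcal{C}]_{\kappa+1:4n,\kappa+1:4n}Z_{\tau}^{*}\big)$. I would bound this using $\mathrm{tr}(AB)\leq\bar{\sigma}(A)\,\mathrm{tr}(B)$ for $A,B\succeq 0$ and $\bar{\sigma}(Z_{\tau})\leq\bar{\sigma}(Z)$, reducing it to $\bar{\sigma}(Z)^{2}\,\mathrm{tr}([\mathcal{C}]_{\kappa+1:4n,\kappa+1:4n})$. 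The diagonal of $\mathcal{C}$ simplifies cleanly: substituting $\lambda_{i}+\lambda_{i}^{*}=2a_{i}$ and $c_{i}+c_{i}^{*}=\theta_{ci}/\pi$ into (\ref{cauchyim}) yields $\mathcal{C}_{ii}=-\frac{\theta_{ci}}{2\pi a_{i}}\|[Z^{-1}]_{i,:}B_{d}\|_{2}^{2}$. Finally $\|[Z^{-1}]_{i,:}B_{d}\|_{2}^{2}\leq\bar{\sigma}(Z^{-1})^{2}\bar{\sigma}(B_{d})^{2}\leq\bar{\sigma}(Z^{-1})^{2}n_{d}$, where the unit-norm columns of $B_{d}$ give $\bar{\sigma}(B_{d})^{2}\leq\|B_{d}\|_{F}^{2}=n_{d}$; summing over $i>\kappa$ and using $\bar{\sigma}(Z)^{2}\bar{\sigma}(Z^{-1})^{2}=\eta^{2}$ produces exactly $e^{2}$.

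The hard part is the second paragraph: one must route around the fact that $\Phi-\Phi_{\kappa}$ is generally \emph{not} positive semidefinite (its leading $\kappa\times\kappa$ block vanishes while the coupling block does not), so a direct trace bound on $\Phi-\Phi_{\kappa}$ would be invalid. The invariance of $\|\Pi_{*}^{\perp}\cdot\|_{F}$ under right unitary factors, combined with the nested Cholesky structure, is precisely what makes the difference of square roots collapse to the single tail factor $Z_{\tau}[\,L_{21}\ L_{22}\,]$ and keeps the bound tight. A minor but necessary check is the sign bookkeeping $-\theta_{ci}/(2\pi a_{i})>0$ for every stable mode (from $a_{i}<0$, $\bar{\omega}>0$, and monotonicity of $\arctan$), which guarantees $\mathcal{C}_{ii}>0$, consistent with $\mathcal{C}\succ 0$, and that the radicand of $e$ is nonnegative.
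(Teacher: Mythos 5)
Your proof is correct and follows essentially the same route as the paper's: a nested triangular (Cholesky-type) factorization of $\Phi$ whose leading block reproduces $\Phi_{\kappa}$ exactly, the triangle inequality in $\|\cdot\|_{F}$ together with $\bar{\sigma}(I-\Pi_{*}^{T}\Pi_{*})\leq 1$, and a trace bound on the trailing block of the Cauchy-like matrix using $\eta$ and the unit-norm columns of $B_{d}$. The only cosmetic difference is that the paper routes the factorization per column of $B_{d}$ through the scalar Cauchy matrix $\bar{\mathcal{C}}$ (via its intermediate matrix $\bar{\Phi}$), whereas you take the Cholesky factor of $\mathcal{C}$ directly; both collapse the error onto the tail modes $i>\kappa$ and yield the same bound $e$.
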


The preceding theorem shows that by solving (\ref{xik}) and applying the projection $\Pi_{*}$, the difference between the resulting value of $\xi$ and the optimum $\xi_{\kappa *}$ is bounded by the error term $e$. This implies that (\ref{xik}) will be most effective in approximating (\ref{xi}) if $e$ is kept small. Note that Definition \ref{Phik} only provides a constructing format for $\Phi_{\kappa}$, while the final expression for $\Phi_{\kappa}$ may vary with respect to different orders of eigenvalues in $\Lambda$. We next explain how to determine the ordering of the eigenvalues $\{ \lambda_{1},...,\lambda_{4n}\}$ to tighten the gap between (\ref{xi}) and (\ref{xik}).

\subsection{Constructing (\ref{xik}) for WAC}

For the power system model (\ref{psff}), the smallness of $e$ in (\ref{xierror}) follows naturally from the consideration of damping only the low-frequency inter-area oscillations. Recall that the frequency of inter-area oscillation modes is significantly smaller than that of the fast intra-area oscillation modes (more than $2$ Hz), and their damping factors are much smaller as well \cite{florian}. When the LQR matrices $Q$ and $R$ are chosen with moderate norms, $H$ will inherit this separation property from $A$, and will exhibit two spectral gaps for the real and imaginary parts of its eigenvalues as follows:
\begin{align}
0 > a_{1} \geq \cdots \geq a_{\kappa} \gg a_{\kappa+1} \geq \cdots \geq a_{4n},  \label{realgap}
\end{align}
\begin{align}
0 < |b_{1}| \leq \cdots \leq |b_{\kappa}| \ll |b_{\kappa+1}| \leq \cdots \leq |b_{4n}| . \label{imaggap}
\end{align}
The definition of $\kappa$ for WAC will be provided shortly. The two spectral gaps (\ref{realgap}) and (\ref{imaggap}) contribute to a small $e$ in two different ways.

{\bf 1. Damping factors:} Following (\ref{xierror}), the value of $e$ is proportional to $\theta_{ci}$, and is inversely proportional to $a_{i}$, $i=\kappa+1,...,4n$. Thus, the large magnitude of $|a_{i}|$, $i=\kappa+1,...,4n$ helps in attaining a small value of $e$. 

{\bf 2. Oscillation frequencies:} The spectral gap for the imaginary part, on the other hand, contributes to a small $\theta_{ci}$. The scalar $\theta_{ci}$ defined by (\ref{thetac}) represents the angular range of the perturbation $\pm \mathfrak{i}\bar{\omega}$ around $\lambda_{i}$, as illustrated in Fig. \ref{angc}. Recall that $[0,\bar{\omega}]$ for our design is limited to the inter-area frequency range only. Thus, $\bar{\omega}$ has similar magnitude as the low frequencies $b_{i}$, $i=1,...,\kappa$, and yields a moderate angular perturbation $\theta_{ci}$ as shown in Fig. \ref{angc1}. Due to the second spectral gap (\ref{imaggap}), $\bar{\omega}$ is significantly smaller than all the high frequencies $b_{i}$, $i=\kappa+1,...,4n$. As a result, the perturbation $\pm \bar{\omega}$ becomes almost negligible compared to $b_{i}$ for all $i=\kappa{+}1,...,4n$, which results in a sufficiently small $\theta_{ci}$ as shown in Fig. \ref{angc2}. 

Combining the two spectral gaps (\ref{realgap}) and (\ref{imaggap}), we complete the definition of $\Phi_{\kappa}$ as follows.
\begin{definition}
(II) Continuing from Definition \ref{Phik}, the definition of index $\kappa$ for the wide-area control problem is such that
$$ 0 < |\lambda_{1}| \leq \cdots \leq |\lambda_{\kappa}| \ll |\lambda_{\kappa+1}| \leq \cdots \leq |\lambda_{4n}| .$$
\label{phiex2}
\end{definition}

In this order, eigenvalues from $\lambda_{\kappa+1}$ through $\lambda_{4n}$ have larger magnitudes for both real and imaginary parts compared to the other eigenvalues. By Definitions \ref{Phik} and \ref{phiex2}, one, therefore, only needs to compute the first $\kappa$ eigenvalues of $H$ with smallest magnitudes from $H$, and their $\kappa$ eigenvectors denoted by $[ Z ]_{:,1:\kappa}$. The first $\kappa$ rows of $Z^{-1}$ can be approximated by the pseudo-inverse of $[ Z ]_{:,1:\kappa}$. These $\kappa$ smallest eigenvalues and eigenvectors can be efficiently solved by Arnoldi algorithm in $\mathcal{O}(n\kappa^{2})$ time \cite{golub}. Therefore, if $\kappa\ll 4n$, the construction of (\ref{xik}) can be significantly simpler than $\mathcal{O}(n^{3})$ required for (\ref{xi}). Note that although the spectral separations (\ref{realgap}) and (\ref{imaggap}) help in a close matching between the two optimizations (\ref{xi}) and (\ref{xik}), it is, however, not necessary for the gap $|\lambda_{\kappa}|\ll |\lambda_{\kappa+1}|$ to exist in order to apply the relaxation (\ref{xik}). Therefore, even if the power system model (\ref{psff}) does not have a significant spectral separation, one can still apply the low-rank approximation $\Phi_{\kappa}$. The upshot will be that the optimality gap specified by (\ref{xierror}) will increase in that case, but the computation of $\Phi_{\kappa}$ will still remain more scalable than that of $\Phi$. 

\begin{figure}
    \centering
    \begin{subfigure}[t]{0.46\columnwidth}
    \centering
        \includegraphics[width=0.95\columnwidth]{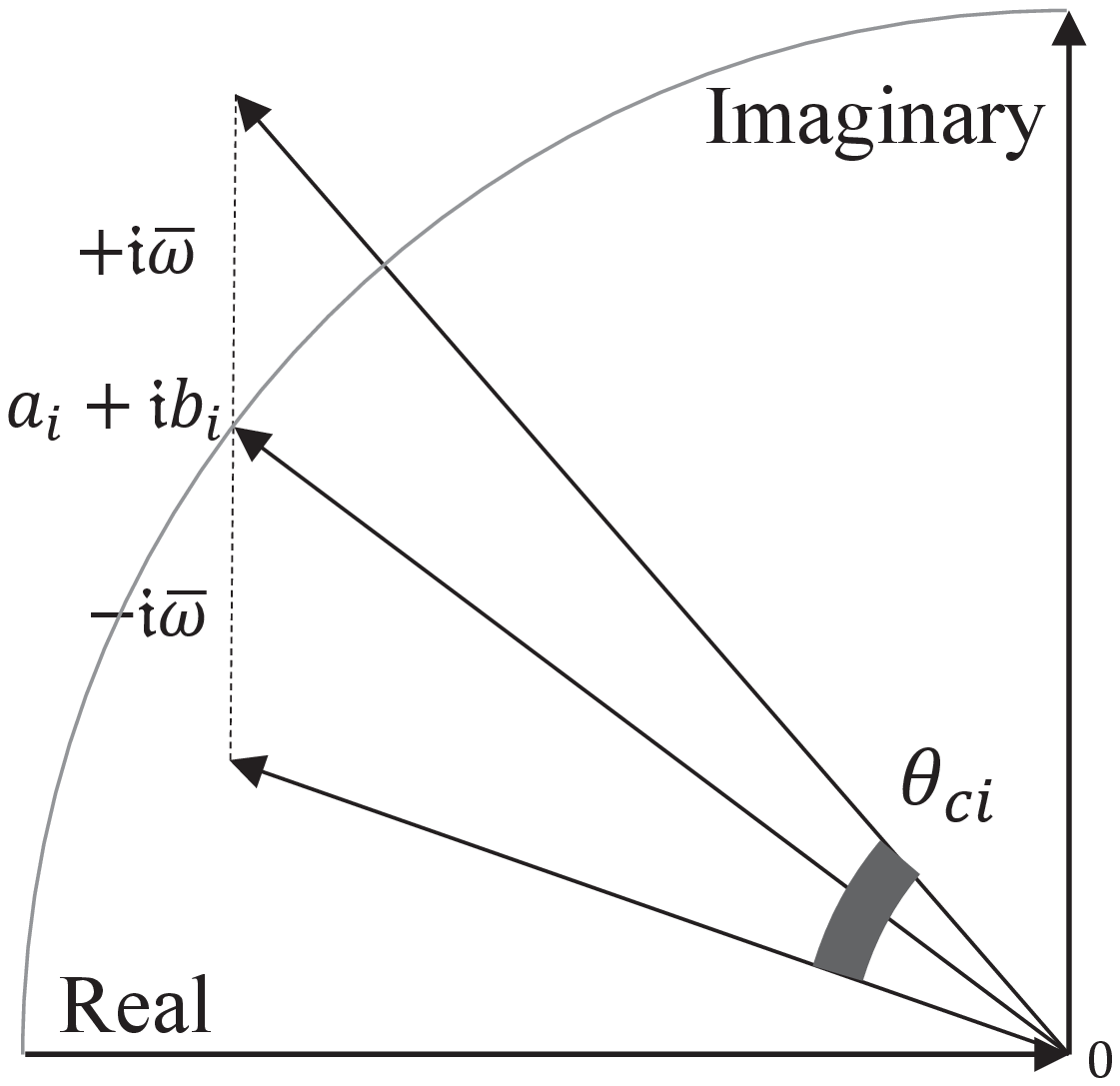}
        \caption{$\bar{\omega}$ in similar magnitude as $b_{i}$}
        \label{angc1}
    \end{subfigure}
    ~        
    \begin{subfigure}[t]{0.46\columnwidth}
    \centering
        \includegraphics[width=0.95\columnwidth]{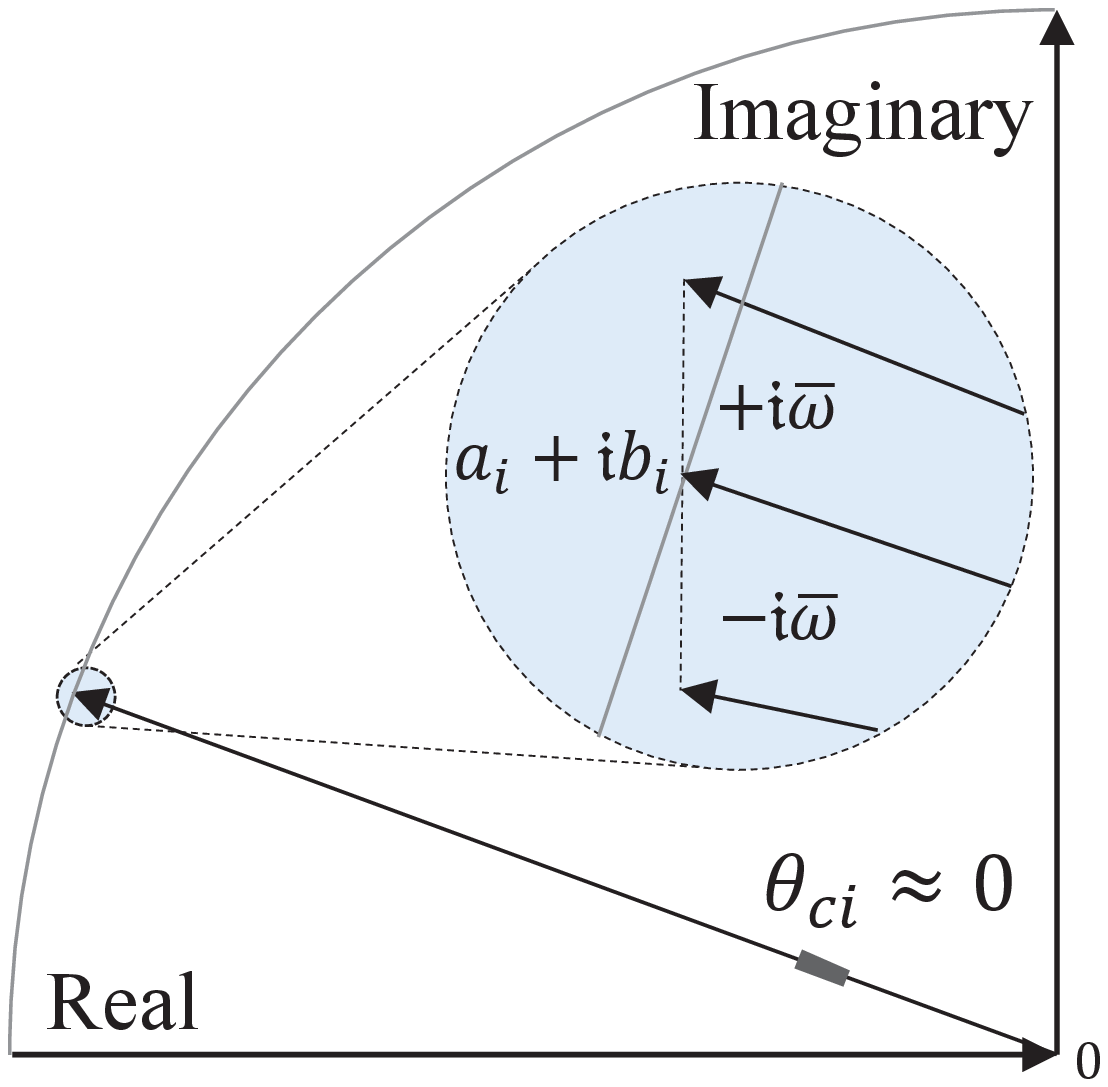}
        \caption{$\bar{\omega} \ll | b_{i} |$}
        \label{angc2}
    \end{subfigure}
    \caption{Interpretation of the angle $\theta_{ci}$}
    \label{angc}
\end{figure}

\subsection{Design of $\mathcal{I}$}

To illustrate the final design of the generator clustering set $\mathcal{I}$, we reduce the problem (\ref{xik}) into its minimal form with respect to $\mathcal{I}$. Recall that in (\ref{Pstack}) the projection matrix $\Pi$ is defined over a block-diagonal structure $\Pi = I_{4}\otimes P$ to preserve the generator identities. By removing this redundancy, the objective function $\xi_{\kappa}$ in (\ref{xik}) can be rewritten in terms of $P$ as 
\begin{align}
\xi_{\kappa} = \| (I - P^{T}P) W\Psi \|_{F},
\end{align}
where $W=diag(w)$ follows from the same weight $w$ specified in Section \RNum{4}, and matrix $\Psi$ is defined by
\begin{align}
\Psi & = W^{{-}1}\big[ [\Phi_{\kappa}^{\frac{1}{2}}]_{1:n,:}\ [\Phi_{\kappa}^{\frac{1}{2}}]_{n{+}1:2n,:} \ ... \nonumber \\
& \quad \quad \quad \quad \quad \quad \ \ ...\ [\Phi_{\kappa}^{\frac{1}{2}}]_{2n{+}1:3n,:}\ [\Phi_{\kappa}^{\frac{1}{2}}]_{3n{+}1:4n,:} \big].
\end{align}
We further denote the row vectors in $\Psi$ by 
\begin{align}
\Psi = \begin{bmatrix}
\psi_{1} & \cdots & \psi_{n}
\end{bmatrix}^{T}.
\end{align}
In these notations, (\ref{xik}) can be rewritten as
\begin{equation}
\begin{aligned}
& \underset{\mathcal{I}_{1},...,\mathcal{I}_{r}}{\mathrm{minimize}}
& &  \xi_{\kappa}^{2} = \sum_{j=1}^{n} w_{j}^{2}\| \psi_{j} - c_{i}\|_{2}^{2} \\
& \quad \ \mathrm{s.t.} & & c_{i} =  \frac{\sum_{j \in \mathcal{I}_{i}} w^{2}_{j}\psi_{j}}{\sum_{j \in \mathcal{I}_{i}} w^{2}_{j}}
\end{aligned}\quad .
\label{wkm}
\end{equation}
This optimization problem is in the same form as standard k-means clustering, where the problem is to assign the vectors $\psi_{j}$ among $r$ clusters such that the vectors $\psi_{j}$ inside each cluster are close to each other in the sense of their weighted distances. If the number of clusters $r$ is fixed, the optimal solution of (\ref{wkm}) can be found in exact $\mathcal{O}(n^{4r\kappa +1})$ time. In practice, however, problem (\ref{wkm}) is usually approached by heuristic algorithms that can provide good local optimum under reasonable numerical complexity. For the sake of this paper, we apply the simplest algorithm called Lloyd's algorithm \cite{kmeans} for solving (\ref{wkm}), which requires $\mathcal{O}(n\kappa r k)$ complexity, where $k$ represents the number of iterations. The design of $\hat{K}$ using Lloyd's algorithm is summarized in Alg. \ref{algk}.

\begin{algorithm}
    \SetKwInOut{Input}{Input}
    \SetKwInOut{Output}{Output}
    \Input{$(A,B,Q,R,B_{d},M)$, $\kappa$ and $r$}
 Find clustering weight $w=\bar{v}$ from (\ref{vbar})\;
 Construct $\Phi_{\kappa}^{\frac{1}{2}}$ according to Definition \ref{Phik} and \ref{phiex2}\;
 {\bf{Initialization (Lloyd's algorithm) }:} Assign $r$ random rows from $\Psi = [\psi_{1},...,\psi_{n}]^{T}$ as the initial centroids $c_{1}^{0},...,c_{r}^{0}$\;
 Find initial clustering sets $\mathcal{I}^{0} = \{ j \to \mathcal{I}_{i}^{0} \ | \ \underset{i=1,...,r} {\mathrm{argmin}}\ w^{2}_{j} \|\psi_{j}-c_{i}^{0}\|_{2}^{2},\ j=1,...,n \}$\;
 Update the centroids: $c_{i}^{0}= \frac{\sum_{j \in \mathcal{I}_{i}^{0}} w^{2}_{j}\psi_{j}}{\sum_{j \in \mathcal{I}_{i}^{0}} w^{2}_{j}}$, $i=1,...,r$ \;
 $k=1$\;
 \While{$\mathcal{I}^{k-1}\neq \mathcal{I}^{k}$ or within maximum iterations}{
  Update clustering sets $\mathcal{I}^{k} = \{ j \to \mathcal{I}_{i}^{k} \ | \ \underset{i=1,...,r} {\mathrm{argmin}}\ w^{2}_{j}\|\psi_{j}-c_{i}^{k-1}\|_{2}^{2},\ j=1,...,n \}$\;
  Update the centroids: $c_{i}^{k}= \frac{\sum_{j \in \mathcal{I}_{i}^{k}} w^{2}_{j}\psi_{j}}{\sum_{j \in \mathcal{I}_{i}^{k}} w^{2}_{j}}$, $i=1,...,r$ \;
  $k=k+1$ \;
 }
 Solve $\hat{K}$ from control inversion steps (\ref{Pdefe}-\ref{Khat}) with $\mathcal{I} = \mathcal{I}^{k}$ and $w=\bar{v}$\;
 \Output{$\hat{K}$}
 \caption{Overall Design of wide-area controller $\hat{K}$ } \label{algk}
\end{algorithm}

\subsection{Computational Complexity}

We close this section by summarizing the overall computational complexity for designing $\hat{K}$ using control inversion. As in Alg. \ref{algk} one starts with constructing the relaxation (\ref{xik}), which is subject to $\mathcal{O}(n\kappa^{2})$, and then solves $\mathcal{I}$ using Lloyd's algorithm in $\mathcal{O}(n\kappa r k)$ time. With the resulting $\mathcal{I}$ and $w$ analytically found in Theorem \ref{wsolution}, $\hat{K}$ can be constructed using control inversion steps (\ref{Pdefe}-\ref{Khat}) in $\mathcal{O}(n^{2}r)$ time. Therefore, the overall complexity for designing $\hat{K}$ is $\mathcal{O}(n\kappa^{2}) + \mathcal{O}(n\kappa r k) + \mathcal{O}(n^{2}r)$, or predominantly $\mathcal{O}(n^{2}r)$ if $\kappa \leq r$. If $\kappa$ and $r$ are much smaller than $n$, this complexity will be far simpler than $\mathcal{O}(n^{3})$ of optimal LQR. It is worth mentioning that the computational saving of this design is mainly facilitated by the low-rank approximation (\ref{xik}). However, even without this approximation, the overall complexity, although $\mathcal{O}(n^{3})$ which is same as optimal LQR, is still more scalable than the designs posed in \cite{florian,wu} that rely on semi-definite programming subject to $\mathcal{O}(n^{4})$ complexity.

\section{Case Study}

In this section, we verify our proposed design using the NPCC $48$-machine model. The model represents the region of Northeastern Power Coordinating Council (NPCC) with the geography and locations of all $48$ machines shown in Fig. \ref{npcc}. The parameters for all the synchronous machines, transmission lines and loads for this model are provided in PST toolbox \cite{pst}. Using these parameters\footnote{Note that in \cite{pst} machines $\{ 15,23{:}27,33{:}35,37{:}48\}$ are not provided with the excitation time constant $T_{doi}^{\prime}$. We choose $T_{doi}^{\prime}$ for these machines to lie between $4$ s and $6$ s, which are comparable to the time constants of the other generators.}, we determine the operating point of the system by solving power flow, and then construct the linearized network model (\ref{psff}). The open-loop model so constructed exhibits $4$ slow oscillation modes, and their frequencies are all less than $2$ rad/s. Therefore, we set $\bar{\omega}=2$ in (\ref{main}) for evaluating the closed-loop performance of $\hat{K}$, and we set $\kappa = 4$ for constructing the low-rank approximation (\ref{xik}). The reference LQR controller $K$ is defined by $R = I$ and
$$Q = (I_{4}{\otimes} M^{\frac{1}{2}})^{{-}1} diag(I_{n} {-} \mathbf{1}_{n}\mathbf{1}_{n}^{T}/n, I_{n}, I_{n}, I_{n})(I_{4}{\otimes} M^{\frac{1}{2}})^{{-}1}.$$
This choice of $Q$ penalizes the oscillations in the generator angle differences, and also satisfies the CPLQR condition in Lemma \ref{dclqr}. We also assume that the disturbance enters the system dynamics through machines $\{ 27{-}30 \}$ as shown in Fig. \ref{npcc}, which means that $B_{d}$ equals the $\{ 27{-}30 \}^{th}$ columns of $B$. The disturbance $d$ is treated as a unit impulse to mimic the effect of a fault. For the performance output $y=Cx$, we let $\bar{C} = [\mathbf{1}_{n-1}\ -I_{n-1}]$ in (\ref{perfc}) to evaluate the angle differences between generator $1$ and all the remaining generators.

\begin{figure}
    \centering
        \includegraphics[width=0.9\columnwidth]{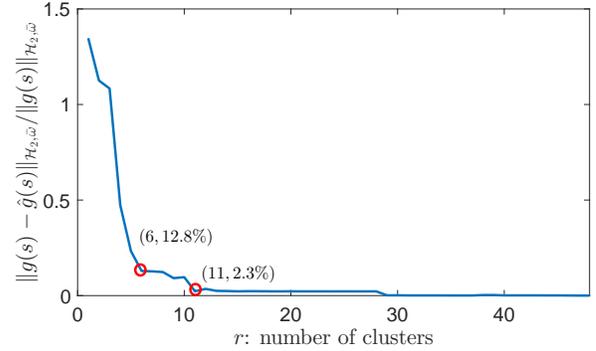}
        \caption{Performance matching between $K$ and $\hat{K}$}
        \label{err}
\end{figure}
\begin{figure*}
    \centering
    \begin{subfigure}[t]{0.3\textwidth} 
    \centering
        \includegraphics[width=0.925\columnwidth]{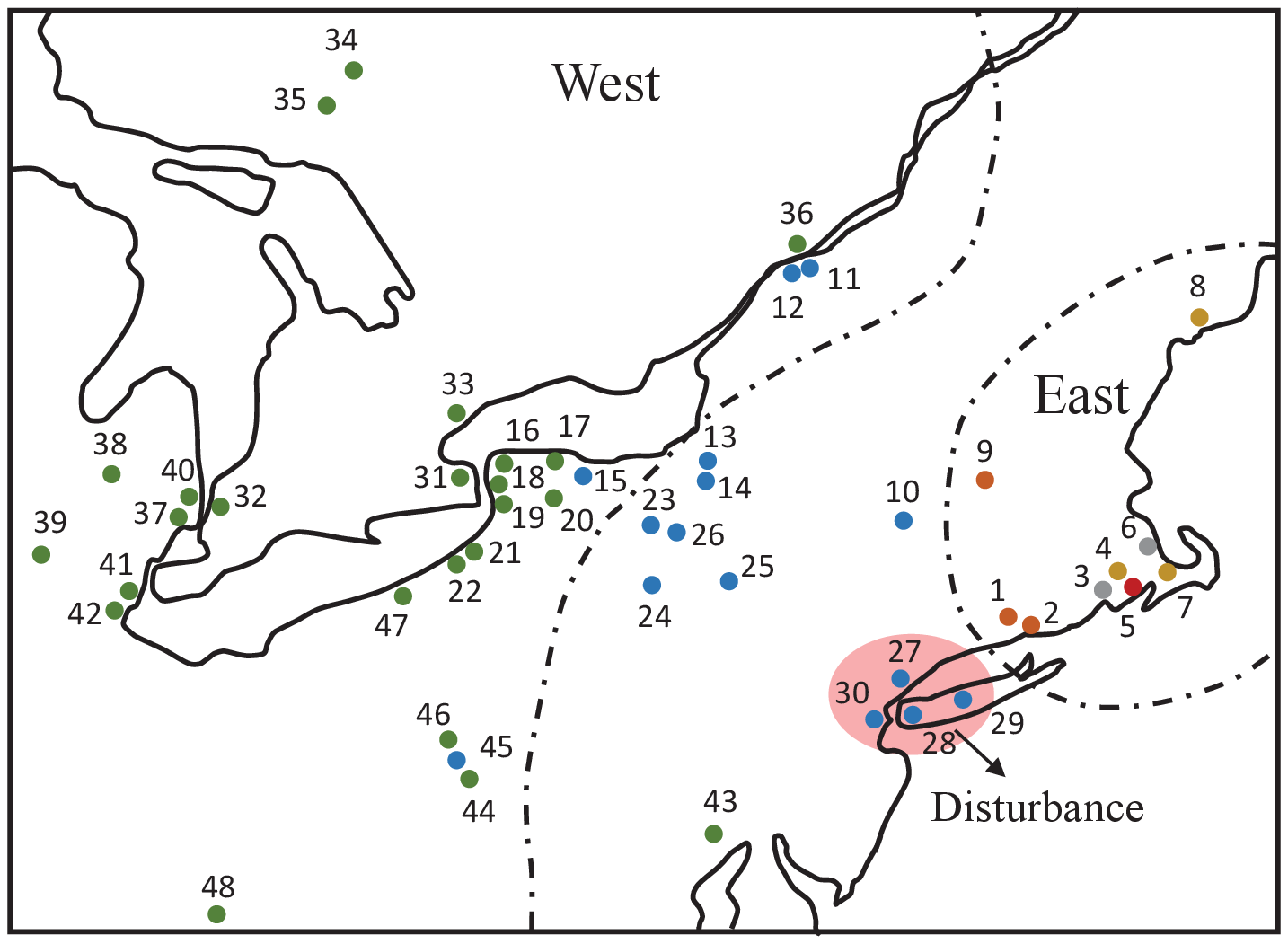}
        \caption{$r=6$}
        \label{npcc1}
    \end{subfigure}
    ~        
    \begin{subfigure}[t]{0.3\textwidth}
    \centering
        \includegraphics[width=0.925\columnwidth]{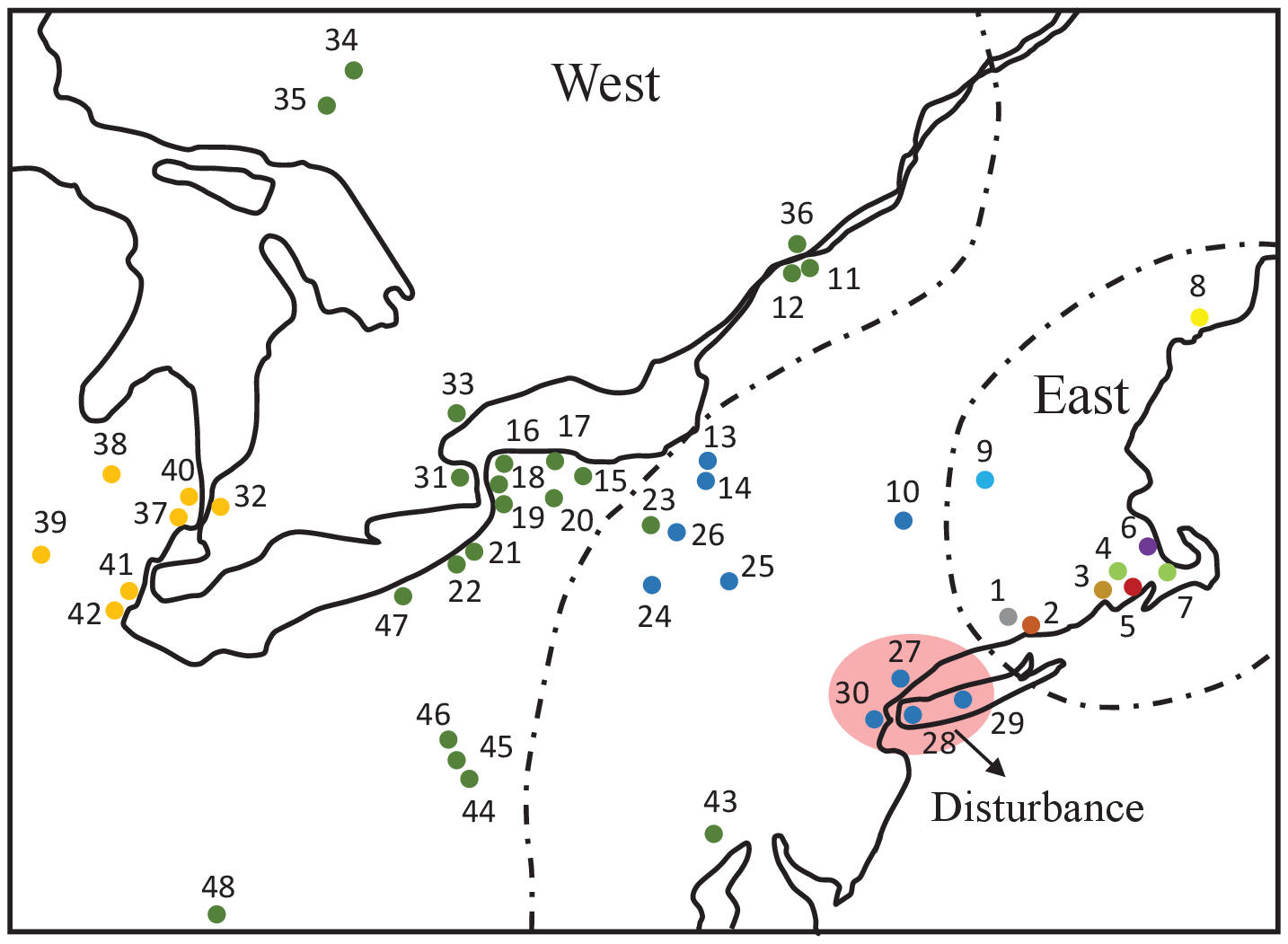}
        \caption{$r=11$ }
        \label{npcc2}
    \end{subfigure}
    ~        
    \begin{subfigure}[t]{0.3\textwidth}
    \centering
        \includegraphics[width=0.925\columnwidth]{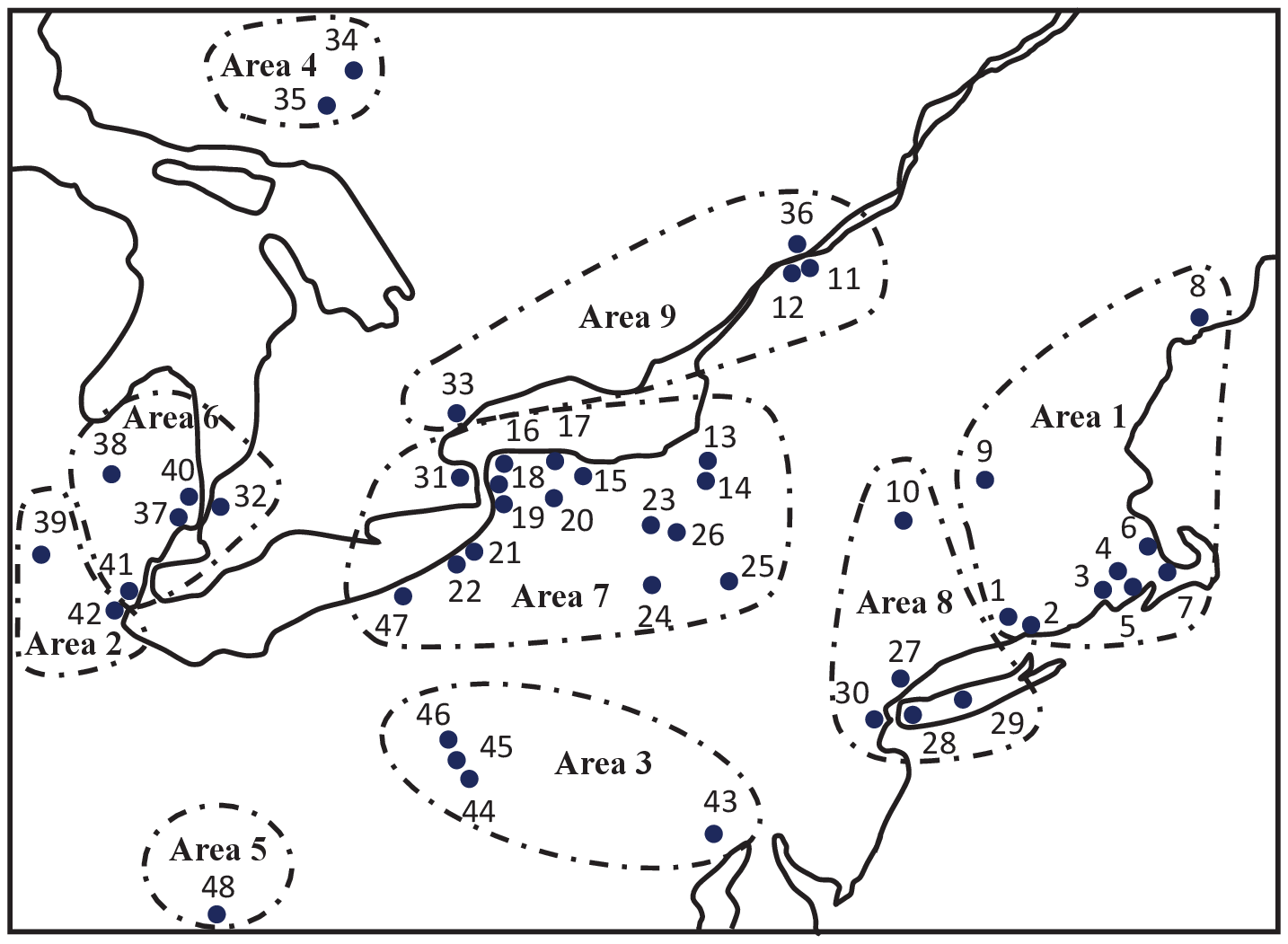}
        \caption{$9$-area partitions from coherency}
        \label{npccco}
    \end{subfigure}
    \caption{Comparison between coherent groups and clustering assignment of $\hat{K}$}\label{npcc}
\vspace{-1em}
\end{figure*}

\subsection{Wide-Area Control of NPCC}

We determine the clustering set $\mathcal{I}$ from (\ref{xik}), and then design the controller $\hat{K}$ with the number of clusters $r$ varying from $1$ to $48$. The resulting controllers are compared with the optimal controller $K$ in Fig. \ref{err}, where the performance metric is the objective function of (\ref{main}) normalized by $\| g(s) \|_{\mathcal{H}_{2},\bar{\omega}}$. It is worth mentioning that solving $\mathcal{I}$ based on (\ref{xi}) only (i.e., without any low-rank approximation), yields the same results as in Fig. \ref{err}. This verifies the effectiveness of (\ref{xik}) in matching (\ref{xi}). From Fig. \ref{err}, the matching error decreases to zero when $r$ scales up to $n=48$. Recall that for simplicity of design and implementation it is preferable to keep $r$ small while maintaining a relatively close performance matching. Two cases that achieve both of these conditions are for $r=6$ and $r=11$, yielding $12.8\%$ and $2.3\%$ matching errors respectively. In terms of the structure of $\hat{K}$, we illustrate the resulting clustering assignments for $r=6$ and $r=11$ in Fig. \ref{npcc}, where machines marked by the same color are assigned to the same cluster. As shown in Fig. \ref{npcc}, the cluster assignments for both $r=6$ and $r=11$ closely resemble the geographical partitions of the actual NPCC system. The distinction is that when $r$ changes from $6$ to $11$, the machines in the western region form one additional cluster, while the machines in the east split up into multiple clusters. These newly formed clusters are geographically close to or are contained inside the clusters corresponding to $r=6$. This means the implementation architecture shown in Fig. \ref{arch} for $r=6$ can also be applied for $r=11$ since it may be possible for the VMs to multi-task the implementation steps required by these extra clusters in their geographical region. Thus, one can choose $r=11$ as the best choice for $r$ in this case, achieving a $2.3\%$ matching error while still maintaining a simple implementation structure as required by $r=6$.

We also compare these two clustering assignments with the $9$ coherency-based generator clusters of the NPCC system as shown in Fig. \ref{npccco}. These coherent areas are partitioned based on the spectral characteristics of the open-loop state matrix. Depending on the power system model, they may represent operating regions of different utility companies. As is obvious from Fig. \ref{npcc}, the cluster assignments for our design for both $r=6$ and $r=11$ differ from the coherent groups, indicating the generators across different utility areas may need to be clustered together for taking the wide-area control action. This observation pinpoints to the fact that WAC should not be limited to coherency-based partitioning.

\subsection{Numerical Savings}

The computation time required for solving the optimal controller $K$ is $0.65$ second in a standard computer, and that for $\hat{K}$ with $r=11$ is only $0.16$ seconds. This computational saving may seem insignificant as the dimension of the NPCC model ($4n{=}192$) is still small compared to realistic power systems where $n$ can be in thousands. To verify the scalability of our design for such larger systems, we compare the computational costs between $K$ and $\hat{K}$ using models with the number of generators ranging from $100$ to $1000$ as shown in Fig. \ref{scala}. The controller $\hat{K}$ in these cases are all designed with $r=11$ and $\kappa =4$. These test models were generated from (\ref{psff}) using randomized but realistic admittance matrices, generator parameters and operating points. As is clear from Fig. \ref{scala}, the design of $\hat{K}$ becomes significantly more scalable than that of the optimal controller $K$ when the dimension of the power system grows. For example, at $n=1000$ the computation time for $\hat{K}$ is only $39.7$ seconds in total, while it requires $568.3$ seconds to solve $K$. This verifies the $\mathcal{O}(n^{2}r)$ complexity of Algorithm \ref{algk}. The best use of our proposed method, therefore, is for very large values of $n$.

\begin{figure}
    \centering
        \includegraphics[width=0.95\columnwidth]{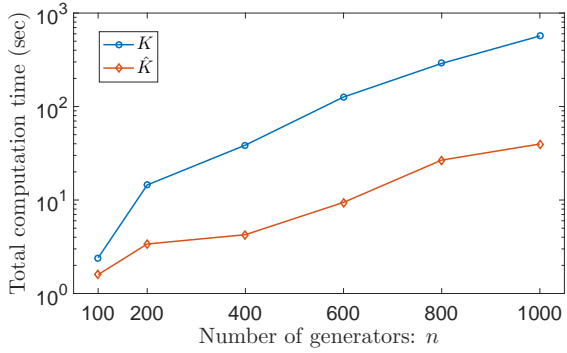}
        \caption{Scalability results for $\hat{K}$ with $r=11$ and $\kappa=4$ }
        \label{scala}
\end{figure}

\section{Conclusion}
In this paper we developed a structured suboptimal LQR controller for wide-area oscillation damping control of large power systems. The control design is approached by a control inversion strategy, which results in a simpler lower-dimensional design and a hierarchical implementation. We compared the numerical efficiency of this method with standard LQR, and also showed how the spectral characteristics of the open-loop model can enhance this efficiency. Our future work will be to extend the control inversion concept from a model-based approach to a model-free approach.

\section*{Appendix A. State Matrices in Model (\ref{psf})}
The linearized matrices in (\ref{psf}) are given as follows.
\begin{align*}
[L_{1}]_{ii} & = - \sum_{j=1,j\neq i}^{n} [L_{1}]_{ij}, \\
[L_{1}]_{ij} & = - E_{qi0}^{\prime}E_{qj0}^{\prime}Y_{\alpha,ij}\cos(\delta_{i0} - \delta_{j0} - \alpha_{ij}), \\
[L_{2}]_{ii} & = - \sum_{j=1,j\neq i}^{m} [L_{2}]_{ij}, \\
[L_{2}]_{ij} & = - \frac{x_{di} - x_{di}^{\prime}}{x_{di}^{\prime}} E_{qj0}^{\prime}Y_{\beta,ij}\sin(\delta_{i0} - \delta_{j0} - \beta_{ij}), \\
[L_{3}]_{ii} & = - \sum_{j=1,j\neq i} [L_{3}]_{ij}, \\
[L_{3}]_{ij} & = - \frac{V_{R}\sin(\delta_{i0} - \delta_{j0} - \beta_{ij}) - V_{I}\cos(\delta_{i0} - \delta_{j0} - \alpha_{ij})}{\sqrt{V_{R}^{2} + V_{I}^{2}}} \\
& \quad \cdot K_{Ai}Y_{\beta,ij}E_{qj0}^{\prime}, \\
[P_{1}]_{ii} & = - E_{qi0}^{\prime}Y_{\alpha,ii}\cos(\alpha_{ii}) \\
& \quad - \sum_{j=1}^{n}\ E_{qj0}^{\prime}Y_{\alpha,ij}\cos(\delta_{i0} - \delta_{j0} - \alpha_{ij}), \\
[P_{1}]_{ij} & = - E_{qi0}^{\prime}Y_{\alpha, ij}\cos(\delta_{i0} - \delta_{j0} - \alpha_{ij}), \\
[P_{2}]_{ii} & = - \frac{x_{di}}{x_{di}^{\prime}} - \frac{x_{di} - x_{di}^{\prime}}{x_{di}^{\prime}} Y_{\beta,ii}\cos(\beta_{ii}), \\
[P_{2}]_{ij} & = - \frac{x_{di} - x_{di}^{\prime}}{x_{di}^{\prime}} Y_{\beta,ij}\cos(\delta_{i0} - \delta_{j0} - \beta_{ij}), \\
[P_{3}]_{ii} & = - K_{Ai}Y_{\beta,ii} \frac{V_{R}\cos(\beta_{ii}) - V_{I}\sin(\beta_{ii})}{\sqrt{V_{R}^{2} + V_{I}^{2} }}, \\
[P_{3}]_{ij} & = - \frac{V_{R}\cos(\delta_{i0} - \delta_{j0} - \beta_{ij}) + V_{I}\sin(\delta_{i0} - \delta_{j0} - \beta_{ij})}{\sqrt{V_{R}^{2} + V_{I}^{2} }} \\
& \quad \cdot K_{Ai}Y_{\beta,ij}, \\
V_{R} & = \sum_{j=1}^{n} Y_{\beta,ij}E_{qj0}^{\prime}\cos(\delta_{i0} - \delta_{j0} - \beta_{ij}), \\
V_{I} & = \sum_{j=1}^{n} Y_{\beta,ij}E_{qj0}^{\prime}\sin(\delta_{i0} - \delta_{j0} - \beta_{ij}).
\end{align*}
Let $[Y]_{ij} = \tilde{Y}_{ij}e^{\mathfrak{i}\tilde{\theta}_{ij}}$ denote the admittance between the $i^{th}$ and $j^{th}$ buses, $i,j{=}1,...,n+n_{l}$, including the load-side impedances. Using matrix $Y$, the parameters $(\tilde{Y}_{\alpha,ij},\tilde{\alpha}_{ij},\tilde{Y}_{\beta,ij},\tilde{\beta}_{ij})$ follows from two equivalent matrices $[Y_{\alpha}]_{ij} = \tilde{Y}_{\alpha,ij}e^{\mathfrak{i}\tilde{\alpha}_{ij}}$ and $[Y_{\beta}]_{ij} = \tilde{Y}_{\beta,ij}e^{\mathfrak{i}\tilde{\beta}_{ij}}$ constructed by
\begin{align*}
Y_{\alpha} = Y_{11} - Y_{12}Y_{22}^{-1}Y_{21}, \quad Y_{\beta} = [Y_{\alpha} + diag(\mathfrak{i}x_{di}^{\prime})]^{-1}, 
\end{align*}
where $Y_{11}$, $Y_{12}$, $Y_{21}$ and $Y_{22}$ are submatrices of $Y$ partitioned according to the bus indices, with $Y_{11}$ corresponds to all the first $n$ generator buses. 

\section*{Appendix B. Derivation of (\ref{xi})}

The first relaxation (\ref{xi}) can then be derived from an upper bound on $\| g_{\epsilon}(s) - \hat{g}_{\epsilon}(s) \|_{\mathcal{H}_{2},\bar{\omega}}$ as follows. 
\begin{customlem}{B.1}
Denote the error system by $g_{e}(s) = g_{\epsilon}(s) - \hat{g}_{\epsilon}(s)$. The objective function in (\ref{mainr}) satisfies the inequality
\begin{align}
\| g_{\epsilon}(s) - \hat{g}_{\epsilon}(s) \|_{\mathcal{H}_{2},\bar{\omega}} \leq \gamma \| E\Phi^{\frac{1}{2}}\|_{F},
\label{bieq} 
\end{align}
where $E=X_{\epsilon}-\hat{X}$, and the scalar $\gamma$ in (\ref{bieq}) is any positive real number such that a real-valued matrix $\Gamma = \Gamma^{T} \succeq 0$ exists and satisfies 
\begin{align}
\begin{bmatrix}
\Gamma (A_{\epsilon}-G\hat{X})+(A_{\epsilon}-G\hat{X})^{T}\Gamma & \Gamma G & C^{T} \\
G\Gamma & -\gamma I & 0 \\
C & 0 & -\gamma I
\end{bmatrix} \prec 0,
\label{brl} 
\end{align}
and $\gamma$ is bounded if $\hat{g}_{\epsilon}(s)$ is stable. 
\label{bound1}
\end{customlem}
\begin{proof}
Using a coordinate transformation of $T=\begin{bmatrix} I & I \\ I & 0 \end{bmatrix}$ and $T^{-1}=\begin{bmatrix} 0 & I \\ I & -I \end{bmatrix}$, The error system $g_{e}(s)$ can be written as
\begin{align*}
g_{e}(s) = -C(sI - A_{\epsilon} + G\hat{X})^{-1}GE(sI - A_{\epsilon} + GX_{\epsilon})^{-1}B_{d}.
\end{align*}
The upper bound (\ref{bieq}) then follows directly from the triangle inequality of the norm $\| \cdot \|_{\mathcal{H}_{2},\bar{\omega}}$, which yields
\begin{align*}
\| g_{e}(s)\|_{\mathcal{H}_{2},\bar{\omega}} \leq &  \| C(sI - A_{\epsilon} + G\hat{X})^{-1}G\|_{\mathcal{H}_{\infty},\bar{\omega}} \\
& \quad \quad \cdot \| E(sI - A_{\epsilon} + GX_{\epsilon})^{-1}B_{d} \|_{\mathcal{H}_{2},\bar{\omega}}.
\end{align*}
The norm $\| \cdot \|_{\mathcal{H}_{\infty},\bar{\omega}}$ follows the similar definition of (\ref{h2w}) as
\begin{align}
& \|h(s)\|_{\mathcal{H}_{\infty},\bar{\omega}} = \underset{-\bar{\omega}\leq \omega \leq \bar{\omega}}{\mathrm{sup}}\ \bar{\sigma}[ g(j\omega)].
\end{align}
for any stable transfer function matrix $h(s)$. This norm represents the $\mathcal{H}_{\infty}$ norm of $\hat{g}_{\epsilon}(s)$ over a limited frequency range, and hence is bounded by the standard $\mathcal{H}_{\infty}$ norm over infinite frequency range. If $\hat{g}_{\epsilon}$ is stable, the value $\gamma$ then exists and upper bounds the $\mathcal{H}_{\infty}$ norm by (\ref{brl}) according to bounded real lemma \cite{rc}. The second norm on the RHS can be written directly as $\| E\Phi^{\frac{1}{2}}\|_{F}$ according to \cite{antoulas}.
\end{proof}

The next lemma provides a stability condition for $\hat{g}_{\epsilon}(s)$.
\begin{customlem}{B.2}
The system $\hat{g}_{\epsilon}(s)$ is asymptotically stable if
\begin{align}
\bar{\sigma}(E\Phi^{\frac{1}{2}})\bar{\sigma}(G)\bar{\sigma}(\Phi^{\frac{1}{2}}) < \underline{\sigma}(B_{d}B_{d}^{T}). \label{stabcond1}
\end{align}
\label{stabcond}
\end{customlem}
\begin{proof}
Consider a quadratic function $V(x) = x^{T}\Phi^{-1}x >0$, where $\Phi$ is the solution of (\ref{lyap}). For $\hat{g}_{\epsilon}(s)$ to be stable, $\dot{V}(x)$ needs to be negative, or equivalently
\begin{align}
(A_{\epsilon}-G\hat{X})^{T}\Phi^{-1} + \Phi^{-1} (A_{\epsilon} - G\hat{X}) \prec 0. \label{lyapine2}
\end{align} 
By pre- and post-multiplying (\ref{lyapine2}) with $\Phi$, and after a few calculations, (\ref{lyapine2}) yields
\begin{align*}
\Phi EG + GE\Phi \prec  B_{d}B_{d}^{T}.
\end{align*}  
The inequality above will be satisfied if
\begin{align*}
\bar{\lambda}(\Phi EG + GE\Phi) < \underline{\lambda}(B_{d}B_{d}^{T}).
\end{align*}
which is further satisfied if
\begin{align*}
\bar{\lambda}(\Phi EG {+} GE\Phi) {\leq} \bar{\sigma}(E\Phi^{\frac{1}{2}})\bar{\sigma}(G)\bar{\sigma}(\Phi^{\frac{T}{2}}) {<} \underline{\sigma}(B_{d}B_{d}^{T}).
\end{align*}
The inequality above proves the condition (\ref{stabcond1}).
\end{proof}

From Lemma \ref{bound1}, the objective function of (\ref{mainr}) is linearly depended on $\| E\Phi^{\frac{1}{2}}\|_{F}$ with respect to the scalar $\gamma$. Hence, one can simply approach the problem (\ref{mainr}) by minimizing the norm $\| E\Phi^{\frac{1}{2}}\|_{F}$. On the other hand, given that $\bar{\sigma}(E\Phi^{\frac{1}{2}}) \leq \| E\Phi^{\frac{1}{2}}\|_{F}$, minimization of $\| E\Phi^{\frac{1}{2}}\|_{F}$ also helps in attaining the inequality (\ref{stabcond1}) from Lemma \ref{stabcond}, which then guarantees the stability of $\hat{g}_{\epsilon}(s)$ and boundedness of $\gamma$. In the literature of model and controller reduction, this type of bound minimization has been commonly attempted (see \cite{antoulas} and the references therein), but under the assumption that $\Pi$ is unstructured. By this assumption, $E$ can be found as an explicit function of $\Pi$. In our case, however, $\Pi$ has a structure as in (\ref{Pdefe}) and (\ref{Pstack}), due to which this explicit functional relationship does not hold anymore. The next theorem, therefore, addresses this problem by finding an upper bound on $\| E\Phi^{\frac{1}{2}}\|_{F}$ as an explicit function of $\Pi$.
\begin{customthm}{B.3}
Denote $\xi = \| (I - \Pi^{T}\Pi)\Phi^{\frac{1}{2}} \|_{F}$. The norm $\| E\Phi^{\frac{1}{2}}\|_{F}$ satisfies the inequality
\begin{align}
& \| E\Phi^{\frac{1}{2}}\|_{F} \leq f(\xi) = \epsilon_{1} \bar{\sigma}(Q) \xi^{2} + 2 \epsilon_{1} \epsilon_{2} \xi ,
\label{rboundn}
\end{align}
where $\epsilon_{1} = \frac{\bar{\sigma}(\Phi^{-\frac{1}{2}}) }{\underline{\sigma}[\Phi^{-\frac{1}{2}}(A_{\epsilon}-GX_{\epsilon})\Phi^{\frac{1}{2}}]}$, $\epsilon_{2} = \tilde{\beta} \bar{\sigma}(A_{\epsilon}) \bar{\sigma}(\Phi^{\frac{1}{2}}) + \bar{\sigma}(Q\Phi^{\frac{1}{2}})$, $\epsilon_{3} = (\tilde{\beta}^{2} + 1 )\bar{\sigma}(\Phi) $, and $\tilde{\beta} \geq \mathrm{sup}_{\Pi}\ \bar{\sigma}(\tilde{X})$ are positive scalars that are independent of $\Pi$.
\label{tmain}
\end{customthm}
\begin{proof}
We derive an analytical expression for $E$ by projecting the lower-dimensional ARE (\ref{arerr}) to the full dimension as
\begin{align}
\Pi^{T}(\tilde{A}_{\epsilon}^{T}\tilde{X} + \tilde{X}\tilde{A}_{\epsilon} + \tilde{Q} - \tilde{X}\tilde{G}\tilde{X})\Pi = 0.\label{rare1}
\end{align}
Notice that $A_{\epsilon}$ and $\tilde{A}_{\epsilon}$ are related by
\begin{align}
\tilde{A}_{\epsilon}\Pi = \Pi A_{\epsilon} - & \Pi A_{\epsilon}\bar{\Pi}^{T}\bar{\Pi} - \epsilon \tilde{v}_{0}w_{0}^{T}\bar{\Pi}^{T}\bar{\Pi} \nonumber \\
& + \epsilon \Pi v_{0}w_{0}^{T} - \epsilon \Pi v_{0}\tilde{w}_{0}^{T}\Pi, \label{rare2}
\end{align}
where $\bar{\Pi}$ is the complement of $\Pi$. Combining (\ref{rare1}) and (\ref{rare2}) then yields
\begin{align}
A_{\epsilon}^{T}\hat{X} + \hat{X}A_{\epsilon} + Q - \hat{X}G\hat{X} = \mathcal{R}, \label{area}
\end{align}
with $\mathcal{R}$ denoting the residue of the equivalent ARE as
\begin{align}
\mathcal{R} := \bar{\Pi}^{T}\bar{\Pi}A_{\epsilon}^{T}\hat{X} + \hat{X}A_{\epsilon}\bar{\Pi}^{T}\bar{\Pi} + Q - \Pi^{T}\tilde{Q}\Pi.
\label{residue}
\end{align}
The expressions of the equivalent ARE above and its residue $\mathcal{R}$ are same as the ones in \cite{nan} (by replacing the notation $\Pi$ with $P$ in proof of Theorem 3.5). Note that when deriving the equivalent ARE above, different than \cite{nan}, the terms associated with $\epsilon$ in (\ref{rare2}) come up because of the consensus reformulation. These terms are eliminated by the relation $\hat{X}v_{0}=0$ as previously shown in Section \RNum{3}. The rest of the proof follows the same as in \cite{nan}.
\end{proof}

Facilitated by the preceding theorem, the objective function in (\ref{mainr}) then satisfies
\begin{align}
\| g_{\epsilon}(s) - \hat{g}_{\epsilon}(s)\|_{\mathcal{H}_{2},\bar{\omega}} \leq \gamma f(\xi).
\end{align}
Thereby, we can approach (\ref{mainr}) by minimizing its bound $f(\xi)$ following the same reason just explained. An important property of this bound is that the unknown $\Pi$ is only contained in the scalar $\xi$, and that $f(\xi)$ is monotonic in $\xi$. As a result, minimization of $f(\xi)$ is equivalent to that of the scalar $\xi$. This leads to the upper bound relaxation (\ref{xi}). Note that due to the nonconvex and combinatorial nature of the problem (\ref{mainr}), in general, it is impossible to quantify the optimality gap between (\ref{mainr}) and (\ref{xi}). The advantage of the relaxation (\ref{xi}) is that the monotonicity of $f(\xi)$ provides one possible way by which this optimality gap can at least be made small by minimizing $\xi$ to close to zero. 

\section*{Appendix C. Proofs}

\subsection{Lemma \ref{cont}}
We prove the controllability by contradiction. Suppose $(A,B)$ is not controllable, which according to PBH test is equivalent to the existence of a vector $v\neq 0$ such that $A^{T}v=\lambda v$ and $v^{T}B=0$. By partitioning $v$ equally as $v = \begin{bmatrix}
v_{1}^{T} & v_{2}^{T} & v_{3}^{T} & v_{4}^{T}
\end{bmatrix}^{T} $, we can write 
\begin{align*}
v^{T}B = \begin{bmatrix}
v_{1}^{T} & v_{2}^{T} & v_{3}^{T} & v_{4}^{T}
\end{bmatrix} \begin{bmatrix}
0 \\
0 \\
0 \\
B_{1}
\end{bmatrix} = v_{4}^{T}B_{1}.
\end{align*}
Given that $B_{1}\succ 0$, any $v$ satisfying the condition $v^{T}B=0$ must follow the form of $v = \begin{bmatrix}
v_{1}^{T} & v_{2}^{T} & v_{3}^{T} & 0
\end{bmatrix}^{T} $. On the other hand, $v$ also has to satisfy $v^{T}A = \lambda v^{T}$, which yields 
\begin{align*}
\begin{bmatrix}
L_{1m}^{T}v_{2} + L_{2m}^{T}v_{3} \\ v_{1} - D_{m}^{T}v_{2} \\ F_{1m}^{T}v_{2} + F_{2m}^{T}v_{3} \\ v_{3}
\end{bmatrix} & = \lambda \begin{bmatrix}
v_{1} \\ v_{2} \\ v_{3} \\ 0
\end{bmatrix}
\end{align*}
Since $F_{1m} = M^{-\frac{1}{2}}F_{1}M^{-\frac{1}{2}}$ is nonsingular, it can be easily verified that $v_{1} = v_{2} = v_{3} =0$ and then $v=0$. This contradicts $v\neq 0$, and thus proves that $(A,B)$ is controllable. The controllability of $(\tilde{A},\tilde{B})$ can be proven by the same rational given the fact that matrix $P$ is orthonormal and thus $\tilde{F}_{1} = PF_{1}P^{T}$ is nonsingular. 

\subsection{Lemma \ref{dclqr}} 

From the definition of $A_{\epsilon}$, the zero mode of the consensus point in $A$ is shifted to the left-half plane without change of basis. This makes $A_{\epsilon}$ Hurwitz, and makes $(A_{\epsilon},BR^{-\frac{1}{2}})$ and $(Q^{\frac{T}{2}},A_{\epsilon})$ trivially stabilizable and detectable. Thereby, the LQR problem $(A_{\epsilon},B,Q,R)$ satisfies the two existence conditions, and guarantees a unique stabilizing solution $X_{\epsilon}\succeq 0$. In addition, by assuming a non-zero vector $v\in ker(X_{\epsilon})$, pre- and post-multiplying (\ref{areep}) with $v$ yields $v^{T}Qv = 0$, while post-multiplying (\ref{areep}) with $v$ yields $X_{\epsilon}A_{\epsilon}v + Qv = 0$. These two equations imply that $ker(X_{\epsilon})$ is an $A_{\epsilon}$-invariant subspace contained in the null-space of $Q$. That is $X_{\epsilon}v_{0}=0$ given $A_{\epsilon}v_{0} = -\epsilon v_{0}$ and $Qv_{0}=0$. Consider a coordinate transformation on state matrices $A-GX_{\epsilon}$ and $A_{\epsilon}-GX_{\epsilon}$ as
\begin{align*}
V^{-1}(A_{\epsilon} - GX_{\epsilon})V = & \begin{bmatrix}
-\epsilon & -w_{0}^{T}GX_{\epsilon}v_{1} \\
0 & \Lambda_{1} - w_{1}^{T}GX_{\epsilon}v_{1}
\end{bmatrix}, \\
V^{-1}(A - GX_{\epsilon})V = & \begin{bmatrix}
0 & -w_{0}^{T}GX_{\epsilon}v_{1} \\
0 & \Lambda_{1} - w_{1}^{T}GX_{\epsilon}v_{1}
\end{bmatrix}.
\end{align*}
It can then be easily seen that $A-GX_{\epsilon}$ preserves the zero mode of the consensus point, and the rest of the eigenvalues are same to those of $\Lambda_{1} - w_{1}^{T}GX_{\epsilon}v_{1}$, which are independent of $\epsilon$ and are all on the left half plane given that $A_{\epsilon}-GX_{\epsilon}$ is Hurwitz. This completes the proof.

\subsection{Lemma \ref{psrp}}
Given $P$ defined over $w=\bar{v}$, it can be verified from Definition \ref{Pdef} that $P$ satisfies $P^{T}P \bar{v} = \bar{v}$ for any clustering set $\mathcal{I}$. As a result, we can write
\begin{align}
\tilde{L}_{1m} P \bar{v} = P L_{1m} P^{T} P \bar{v} = P L_{1m} \bar{v} =0,
\label{lapeig} 
\end{align}
which implies that $P \bar{v}$ is the right eigenvector of $\tilde{L}_{1m}$ corresponding to the zero eigenvalue. Following the same rationale as in (\ref{lapeig}), we can also show that $\tilde{L}_{2m} P \bar{v} = \tilde{L}_{3m} P \bar{v} = 0$. Therefore, denoting $\tilde{v}_{0} = \Pi v_{0} = \begin{bmatrix}
\bar{v}^{T}P^{T} & 0 & 0 & 0
\end{bmatrix}^{T}$, it holds that $\tilde{A}\tilde{v}_{0} = 0$, which means $\tilde{A}$ preserves the zero mode at $\tilde{v}_{0}$. The sign of $\tilde{Q}$ follows from eigenvalue's interlacing property, which guarantees $\tilde{Q}\succeq 0$ given that $\tilde{Q} = \Pi Q\Pi^{T}$ is congruent to $Q$. The null space of $\tilde{Q}$ can be similarly proven by $\tilde{Q}\tilde{v}_{0} = \Pi Qv_{0} = 0$.

\subsection{Lemma \ref{phiex}}

By definition of Hamiltonian matrix, the closed-loop state matrix $A_{\epsilon}-GX_{\epsilon}$ can be found from the Hamiltonian eigenspace as $A_{\epsilon}-GX_{\epsilon} = Z\Lambda Z^{-1}$ \cite{antoulas}. Using this expression, we can write $S(\bar{\omega})$ as
\begin{align}
S(\bar{\omega}) & = - \frac{\mathfrak{i}}{2\pi} \ln[(\mathfrak{i}\bar{\omega} I - A_{\epsilon} + GX_{\epsilon})(-\mathfrak{i}\bar{\omega} I - A_{\epsilon} + GX_{\epsilon})^{-1}] \nonumber \\
& = - \frac{\mathfrak{i}}{2\pi} Z \ln[(\mathfrak{i}\bar{\omega} I - \Lambda )(-\mathfrak{i}\bar{\omega} I - \Lambda )^{-1}] Z^{-1}.
\end{align}
According to \cite{antoulas}, the solution of the Lyapunov equation (\ref{lyap}) follows the form of (\ref{gramsolu}), where matrix $\mathcal{C}$ is initially defined by
\begin{align}
\mathcal{C}_{ij} = -\frac{[Z^{-1}S(\bar{\omega})B_{d}B_{d}^{*}Z^{-*} {+ } Z^{-1}B_{d}B_{d}^{*}S(\bar{\omega})Z^{-*}]_{ij}}{\lambda_{i} + \lambda_{j}^{*} } .
\end{align} 
Combining the two equations above, and after a few calculations yield the expression in (\ref{cauchyim}).

\subsection{Theorem \ref{kerror}}

To prove the error bound in (\ref{xierror}), we define an intermediate matrix $\bar{\Phi}$ as
\begin{align}
\bar{\Phi} =\begin{bmatrix} \bar{\Phi}_{1} & \cdots & \bar{\Phi}_{n_{d}}
\end{bmatrix},
\label{nphi}
\end{align}
where $\bar{\Phi}_{i} = Zdiag(Z^{-1}B_{di})\bar{\mathcal{C}}^{\frac{1}{2}}$, $i=1,...,n_{d}$, $B_{di}$ is the $i^{th}$ column of $B_{d}$, and $\bar{\mathcal{C}} = \bar{\mathcal{C}}^{\frac{1}{2}} \bar{\mathcal{C}}^{\frac{T}{2}}$ is defined by
\begin{align*}
\bar{\mathcal{C}}_{ij} = - \frac{c_{i}+c_{j}^{*}}{\lambda_{i}+\lambda_{j}^{*}}. 
\end{align*}
It can be verified that matrix $\bar{\Phi}$ satisfies $\Phi = \bar{\Phi}\bar{\Phi}^{T} = \Phi^{\frac{1}{2}}\Phi^{\frac{T}{2}}$. We further partition $\bar{\mathcal{C}}^{\frac{1}{2}}$ - the Cholesky decomposition of $\bar{\mathcal{C}}$ by
\begin{align*}
\bar{\mathcal{C}}^{\frac{1}{2}}  = \begin{bmatrix}
\bar{\mathcal{C}}^{\frac{1}{2}}_{11} & 0 \\
\bar{\mathcal{C}}^{\frac{1}{2}}_{21} & \bar{\mathcal{C}}^{\frac{1}{2}}_{22}
\end{bmatrix} = \begin{bmatrix}
\bar{\mathcal{C}}^{\frac{1}{2}}_{1{:}\kappa,1{:}\kappa} & 0 \\
\bar{\mathcal{C}}^{\frac{1}{2}}_{\kappa{+}1{:}n,1{:}\kappa} & \bar{\mathcal{C}}^{\frac{1}{2}}_{\kappa{+}1{:}n,\kappa{+}1{:}n}
\end{bmatrix}.
\end{align*}
With these notations, $\bar{\Phi}_{i}$ in (\ref{nphi}) can be decoupled into $ \bar{\Phi}_{i} = \bar{\Phi}_{i,s} + \bar{\Phi}_{i,f} $, where
\begin{align*}
\bar{\Phi}_{i,s} & = \Big[ [Z]_{:,1:\kappa}diag([Z^{-1}]_{1:\kappa,:}B_{di})\mathcal{C}^{\frac{1}{2}}_{11} \quad  0 \Big],  \\
\bar{\Phi}_{i,f} & = \Big[
[Z]_{:,\kappa+1:n}diag([Z^{-1}]_{\kappa+1:n,:}B_{di})\mathcal{C}^{\frac{1}{2}}_{21} \ \ ... \\
& \quad \quad \quad \quad \quad ... \quad  [Z]_{:,\kappa+1:n}diag([Z^{-1}]_{:,\kappa+1:n}B_{di})\mathcal{C}^{\frac{1}{2}}_{22}
\Big],
\end{align*}
and $\Phi_{\kappa}$ can be rewritten as $\Phi_{\kappa} = \sum_{i=1}^{n_{d}} \bar{\Phi}_{i,s}\bar{\Phi}_{i,s}^{T}$. Facilitated by this expression of $\Phi_{\kappa}$, we can find that $\xi = \| (I_{n}-\Pi^{T}\Pi) \Phi^{\frac{1}{2}} \|_{F} = \| (I_{n}-\Pi^{T}\Pi) \bar{\Phi}\|_{F}$ satisfies 
\begin{align}
\xi \leq \| (I_{n}-\Pi^{T}\Pi)\Phi_{\kappa}^{\frac{1}{2}} \|_{F} + \| (I_{n}-\Pi^{T}\Pi) \bar{\Phi}_{f} \|_{F},
\label{xi2}
\tag{58}
\end{align}
where $\bar{\Phi}_{f} = \begin{bmatrix}
\bar{\Phi}_{1,f} & \cdots & \bar{\Phi}_{n_{b},f}
\end{bmatrix}$. 
Notice that the second norm on the RHS of (\ref{xi2}) is further bounded by $\| (I_{n}-\Pi^{T}\Pi)\bar{\Phi}_{f} \|_{F} \leq \| \bar{\Phi}_{f}\|_{F}$ with %
\begin{small}
\begin{align*}
 \| \bar{\Phi}_{f}\|_{F} & \leq \sqrt{\eta^{2} \sum_{i=1}^{n_{d}} (\| \mathcal{C}^{\frac{1}{2}}_{21}\|_{F}^{2} + \| \mathcal{C}^{\frac{1}{2}}_{22}\|_{F}^{2}) } = \sqrt{\eta^{2} n_{d}\sum_{i=\kappa+1}^{4n} -\frac{c_{i}+c_{i}^{*}}{\lambda_{i}+\lambda_{i}^{*}}}.
\end{align*} %
\end{small}%
Inserting this along with $\Pi_{*}$ to the RHS of (\ref{xi2}) yields the error bound in (\ref{xierror}).

\end{document}